\documentclass[11pt]{article}

\usepackage[T1]{fontenc}
\usepackage{amsmath,amssymb,amsthm,mathtools}
\usepackage{thmtools}
\usepackage[margin=1in]{geometry}
\usepackage{float}
\usepackage{tikz}
\usepackage{xcolor}
\usepackage{booktabs,tabularx}
\usepackage{hyperref}

\usepackage[noabbrev,nameinlink,capitalize]{cleveref}
\crefname{appendix}{Appendix}{Appendices}
\Crefname{appendix}{Appendix}{Appendices}

\declaretheorem[numberwithin=section]{theorem}
\declaretheorem[numberlike=theorem]{lemma}
\declaretheorem[numberlike=theorem]{proposition}
\declaretheorem[numberlike=theorem]{corollary}

\declaretheorem[numberlike=theorem,style=definition]{definition}

\usetikzlibrary{arrows.meta}

\hypersetup{
  colorlinks=true,
  linkcolor=blue,
  citecolor=blue,
  urlcolor=blue
}

\definecolor{DarkGreen}{rgb}{0.0, 0.4, 0.0}

\newcommand{\F}{\mathbb{F}}
\newcommand{\E}{\mathbb{E}}
\newcommand{\Prb}{\mathbb{P}}
\newcommand{\cP}{\mathcal{P}}
\newcommand{\OPT}{\operatorname{OPT}}
\newcommand{\cost}{\operatorname{cost}}
\newcommand{\DSN}{\textup{\textsc{DSN}}}

\title{Hardness of Online Directed Steiner Network}
\author{
  \begin{tabular}{c@{\hspace{3em}}c}
    Gary Hoppenworth\thanks{Work done while an intern at Microsoft Research.}
      & Yaowei Long \\
    \normalsize University of Michigan & \normalsize University of Michigan \\
    \normalsize\texttt{garytho@umich.edu} & \normalsize\texttt{yaoweil@umich.edu} \\[3.5ex]
    Sepideh Mahabadi & Jakub Tarnawski \\
    \normalsize Microsoft Research & \normalsize Microsoft Research \\
    \normalsize\texttt{smahabadi@microsoft.com} & \normalsize\texttt{jatarnaw@microsoft.com}
  \end{tabular}
}
\date{}

\begin{document}

\maketitle

\begin{abstract}
In the Directed Steiner Network (DSN) problem we are given a directed graph and a set of demands $(s_i,t_i)$, and asked to find a cheap subgraph connecting each terminal pair. In its online version, the demands arrive online and must be served by buying edges irrevocably.

DSN is a fundamental hard problem in network design, heavily studied in both the offline and the online setting. Offline, it has a superpolylogarithmic hardness of approximation. However, offline hardness says nothing about online algorithms, which are computationally unrestricted. It has been an open question whether uncertainty itself (needing to commit to a solution without knowing future demands) rules out polylogarithmic-competitive online algorithms.

In this work, we show the first such unconditional, information-theoretic hardness. Namely, we give an $\exp\!\bigl(\Omega(\sqrt{\log n})\bigr)$ bound on the competitive ratio, which holds even for randomized algorithms against an oblivious adversary, and on unit-cost DAGs.

Our proof uses a novel connection between online network design and algebraic coding theory. We encode requests using a hidden low-degree polynomial, whose past evaluations reveal nothing about future ones. We then use list-recovery bounds to show that an algorithm cannot make cheaply reusable decisions without knowing those future evaluations.
\end{abstract}

\section{Introduction}

Directed Steiner Network (\DSN) is a basic problem in directed network design.
Given a directed graph with nonnegative edge costs and ordered terminal pairs
$(s_1,t_1),\ldots,(s_k,t_k)$, one seeks a minimum-cost subgraph containing an
$s_i$--$t_i$ path for every $i$.  The problem is also called Directed Steiner
Forest (though a solution need not be a forest)
when every connectivity requirement is one, as it is throughout this
paper.  It contains Directed Steiner Tree as the single-source special case
and is closely connected to directed spanners and reachability preservers.

In the online problem the graph and all edge costs are known from the start,
but the terminal pairs arrive one at a time.  After pair $(s_i,t_i)$ arrives,
the algorithm must irrevocably buy enough edges to connect it before seeing the
next pair.  The benchmark is the
minimum-cost solution that knows the entire demand sequence upfront.  This
model has two key sources of difficulty: finding a good network may
be computationally hard even offline, and committing to a network without
knowing future demands may itself be costly.  Our goal is to study the
second source.

Let $n$ denote the number of vertices and $k$ the number of demand pairs.  For
every fixed $\varepsilon>0$, the best known polynomial-time approximation
ratios for offline \DSN{} with arbitrary costs are
$O(k^{1/2+\varepsilon})$ in terms of $k$
due to Chekuri, Even, Gupta, and Segev~\cite{ChekuriEvenGuptaSegev2011},
and
$O(n^{2/3+\varepsilon})$ in terms of $n$
due to Berman, Bhattacharyya, Makarychev, Raskhodnikova, and Yaroslavtsev~\cite{BermanEtAl2013}.
Unit costs permit a better vertex-dependent bound
of $O(n^{4/7+\varepsilon})$
due to Abboud and Bodwin~\cite{AbboudBodwin2018}.

Online, Chakrabarty, Ene, Krishnaswamy, and Panigrahi give a randomized
$O(k^{1/2+\varepsilon} \mathrm{polylog}(n))$ competitive algorithm for arbitrary
costs~\cite{ChakrabartyEtAl2018}.  In the unit-cost case, Grigorescu, Lin, and
Quanrud obtained an $O(n^{2/3+\varepsilon})$ guarantee~\cite{GrigorescuLinQuanrud2021}; Bodwin and Le recently improved this to
$O(n^{3/5+\varepsilon})$~\cite{BodwinLe2025}.
All these online guarantees are randomized against an oblivious adversary.
No $O(n^\alpha)$-competitive algorithm for general-weight online DSN is known for $\alpha<1$.

The best known guarantees are thus polynomial,
which prompts the natural question:
\emph{is a polylogarithmic approximation/competitive ratio possible?}

In the offline setting, this question was settled negatively already in the 20th century:

\begin{theorem}[Dodis--Khanna~\cite{DodisKhanna1999}]
\label{thm:dodis-khanna}
For every constant $\delta>0$, \DSN{} has no deterministic
polynomial-time
approximation with ratio
$
  2^{(\log n)^{1-\delta}}
$
unless $\mathsf{NP}$ has quasipolynomial-time algorithms.
\end{theorem}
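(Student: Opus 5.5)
The plan is to reduce from Label Cover with large alphabet and small soundness, following Dodis and Khanna: directed reachability can simulate label choices, so a cheap subgraph certifies a good labeling. I would start from the PCP theorem combined with Raz's parallel repetition. This gives, for a bipartite Label Cover instance of size $N$ and any number $\ell$ of repetitions, a gap between value $1$ and value $2^{-c\ell}$, with instance size $N^{O(\ell)}$. Choosing $\ell=\mathrm{polylog}(N)$ makes the reduction quasipolynomial. The gap $2^{\Omega(\ell)}$ then becomes $2^{(\log n)^{1-\delta}}$ once $\ell$ is taken to be a suitable power of $\log N$, since $\log n=\Theta(\ell\log N)$ and we can set $\ell=(\log N)^{(1-\delta)/\delta}$. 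This parameter choice is why the hypothesis is ``$\mathsf{NP}$ has quasipolynomial-time algorithms'' rather than $\mathsf{P}=\mathsf{NP}$.

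The reduction goes from Label Cover (more precisely its minimization variant Min-Rep, which is the convenient form) to \DSN. Let the instance have left supervertices $A_1,\dots,A_p$ and right supervertices $B_1,\dots,B_q$, where each supervertex is a group of label-vertices, together with the constraint edges between label-vertices. I build a directed graph as follows.
\begin{itemize}
  \item For each left group $A_i$, add a source $s_{A_i}$ with unit-cost arcs to every label-vertex of $A_i$.
  \item For each right group $B_j$, add a sink $t_{B_j}$ with unit-cost arcs from every label-vertex of $B_j$.
  \item Add zero-cost arcs $a\to b$ for every label pair $(a,b)$ that satisfies the constraint on superedge $(A_i,B_j)$.
  \item Add a demand $(s_{A_i},t_{B_j})$ for each superedge.
\end{itemize}
Any feasible \DSN{} solution of cost $C$ selects at most $C$ labels in total, and these labels cover every superedge in the Min-Rep sense. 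Conversely, a Min-Rep cover of size $C$ yields a \DSN{} solution of cost $C$. So \DSN{} inherits the Min-Rep gap exactly. The graph is also acyclic, which is a bonus.

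The remaining step is the Min-Rep gap itself, and I expect this to be the main obstacle. In the completeness case there is a cover using one label per group, so its cost is $p+q$. In the soundness case I must show that any cover of total size $C\le (p+q)\cdot g$ lets us extract a labeling satisfying roughly a $1/g^2$ fraction of the superedges. I would do this by randomized rounding: pick one uniformly random chosen label per group, then apply averaging and Markov's inequality, restricted to groups carrying at most $O(g)$ labels. This shows $g\ge 2^{\Omega(\ell)/2}$ and hence a gap of $2^{\Omega(\ell)}$.

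To finish, I would check that regularity of the Label Cover instance makes the Markov argument go through, and that the size blow-up $n=N^{O(\ell)}\cdot|\Sigma|$ remains quasipolynomial. Finally, a polynomial-time approximation in $n$ better than the gap would then decide SAT in quasipolynomial time.
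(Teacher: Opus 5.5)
Your outline is sound, but it takes a different route from the one the paper describes. The paper does not reprove this theorem; it cites \cite{DodisKhanna1999} and sketches their argument in \cref{sec:related-work}. In that argument the intermediate covering problem is Symmetric Label Cover. Its hardness is imported from Khanna--Sudan--Trevisan's Total Label Cover, and ultimately from the low-error multiprover proof systems of Raz--Safra and Arora--Sudan.

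You instead proceed in three steps. You get small soundness from the basic PCP theorem plus Raz's parallel repetition. You pass to the covering version by Kortsarz-style random label selection (Min-Rep). You finish with the standard gadget, which is a weighted analogue of \cref{lem:realization}. Zero-cost middle arcs are legitimate there because the statement is for weighted graphs, so you avoid the additive $3|S|$ term the paper has to control.

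The overall shape is the same in both: low-soundness Label Cover, then a ``cover every edge'' variant, then \DSN{}. The difference is where the low soundness comes from. Your route is more elementary, and it makes the quasipolynomial hypothesis transparent: it comes directly from the $N^{O(\ell)}$ blow-up with $\ell=\mathrm{polylog}(N)$. The low-error PCP route reaches subconstant soundness without repetition. That is what allows NP-hardness statements for ratios $2^{(\log n)^{\beta}}$ with some fixed $\beta>0$. For the exponent $1-\delta$, however, both routes need a quasipolynomial-size reduction.

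One step needs a concrete fix, not just ``regularity.'' Restricting to groups with $O(g)$ labels presupposes that the two sides have comparable numbers of groups. Suppose $p/q=r$; for instance $r=(5/3)^{\ell}$ after repeating the clause--variable game of 3SAT-5. Then within the budget $g(p+q)$, each right group may carry about $gr$ labels. Random selection then only certifies value $\Omega(1/(g^{2}r))$, which does not contradict soundness $2^{-c\ell}$ when $c$ is Raz's small constant.

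Balance the instance first. One option is to replace each left vertex by $q$ copies and each right vertex by $p$ copies, with all copies of adjacent pairs joined; this preserves biregularity and the value. Another is to give the arcs $b\to t_{B_j}$ cost $p/q$. Either way both sides average $O(g)$ labels per group, and your Markov argument goes through.
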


This bound is larger than $\log^c n$ for any $c$.
\emph{Randomized} polynomial-time approximation algorithms with this ratio
can be ruled out
under the slightly stronger assumption that $\mathsf{NP}$ has no \emph{randomized}
quasipolynomial-time algorithms.
Also, their theorem is stated for weighted graphs,
but one can obtain the same statement for unit-cost graphs;
for completeness, we give a brief argument in Appendix~\ref{app:offline-unitization}.

In terms of $k$, Dinur and Manurangsi proved
$k^{1/4-o(1)}$ hardness under ETH, and the same for fixed-parameter
algorithms under Gap-ETH~\cite{DinurManurangsi2018}.  Manurangsi, Rubinstein,
and Schramm obtained an essentially tight $o(\sqrt{k})$ barrier under the
stronger Strongish Planted Clique Hypothesis
~\cite{ManurangsiRubinsteinSchramm2021}.

These results imply that an \emph{efficient} online algorithm cannot
obtain a polylogarithmic competitive ratio,
but they do not hold against algorithms with unlimited computation.
Fundamentally, they do not establish whether \emph{uncertainty} itself -- the core source of difficulty in the online model -- also precludes such algorithms.

Prior to our work,
the best information-theoretic lower bounds for online algorithms
were: $\widetilde{\Omega}(\log^2 n)$ hardness for deterministic algorithms~\cite{AlonEtAl2009},
and $\Omega(\log n)$ hardness for randomized algorithms against an oblivious adversary~\cite[Theorem 2.2.1]{Korman2004}.\footnote{
Both bounds were developed for Online Set Cover, and apply to DSN
via a standard reduction to Directed Steiner Tree:
represent each set $S$ with an edge $(r,S)$ with cost equal to that set's cost, where $r$ is a common root, and add edges $(S,e)$ for every element $e \in S$.
}
See \cref{tab:hardness-summary} for a comparison of the best known lower bounds.

\begin{table}[t]
\centering
\caption{Best known approximation and competitive lower bounds for \DSN{}.
Here $n$ is the number of graph
vertices and $k$ is the number of requests.  ``Unconditional'' means an
information-theoretic lower bound in the online model.  The last
column states whether the result applies to randomized algorithms against an
oblivious adversary.
We note that in \cite{Korman2004}'s $\Omega(k)$ lower bound, the hard instances have $k=\Theta(\log n)$.
}
\label{tab:hardness-summary}
\footnotesize
\renewcommand{\arraystretch}{1.15}
\begin{tabularx}{\textwidth}{@{}
  >{\raggedright\arraybackslash}p{0.2\textwidth}
  >{\raggedright\arraybackslash}p{0.22\textwidth}
  >{\raggedright\arraybackslash}X
  >{\centering\arraybackslash}p{0.11\textwidth}@{}}
\toprule
Reference & Lower bound & Complexity assumptions & Randomized algorithms? \\
\midrule
\cite{DodisKhanna1999}
  & $2^{(\log n)^{1-\delta}}$
  & $\mathsf{NP} \nsubseteq \mathsf{[Randomized] QP}$
  & Yes \\
\cite{DinurManurangsi2018}
  & $k^{1/4-o(1)}$
  & ETH (polynomial time), Gap-ETH (FPT)
  & No \\
\cite{ManurangsiRubinsteinSchramm2021}
  & $\Omega(\sqrt{k})$
  & SPCH, even in FPT time
  & Yes \\
\cite{AlonEtAl2009} 
  & $\Omega(\log^2 n/\log\log n)$
  & \textbf{Unconditional}
  & No \\
\cite{Korman2004} 
  & $\Omega(\log n), \Omega(k)$ 
  & \textbf{Unconditional}
  & Yes \\
\textbf{This work}
  & $\exp(\Omega(\sqrt{\log n}))$
  & \textbf{Unconditional}
  & Yes \\
\bottomrule
\end{tabularx}
\end{table}

\paragraph{Our contribution.}

We prove that no competitive ratio polylogarithmic in $n$ is possible even with unbounded computation.
Our hardness result holds for randomized algorithms against an oblivious adversary.
The hard instances are DAGs with unit edge costs.

\begin{restatable}[Main theorem]{theorem}{mainthm}
\label{thm:main}
There is an absolute constant $c>0$ such that, for every sufficiently large
power of two $R$, there is a fixed unit-cost DAG $J_R$ with the following property.
For every randomized online \DSN{} algorithm $\mathcal A$
there is a demand sequence $\sigma$ on $J_R$, fixed independently of
$\mathcal A$'s random tape $\omega$, for which
\[
  \E_\omega\!\left[\cost_{\mathcal A}(\sigma,\omega)\right]
  \geq
  \exp\!\bigl(c\sqrt{\log |V(J_R)|}\bigr)
  \OPT_{J_R}(\sigma).
\]
\end{restatable}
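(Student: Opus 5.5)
By Yao's minimax principle, it suffices to fix the DAG $J_R$ and exhibit a probability distribution $\mathcal D$ over demand sequences on $J_R$ with two properties. Every sequence in its support has $\OPT_{J_R}(\sigma)\le B$. Every \emph{deterministic} online algorithm pays expected cost at least $\exp\bigl(c\sqrt{\log|V(J_R)|}\bigr)\cdot B$ against $\mathcal D$. Averaging over the random tape then yields a single fixed sequence $\sigma$ with the claimed inequality.

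Following the abstract, $\mathcal D$ is generated by a hidden uniformly random polynomial $p\in\F_q[x]$ of degree less than $d$. I would take $q$ a power of two tied to $R$, with $d\approx\sqrt{\log R}$ and $q=2^{\Theta(\sqrt{\log R})}$, so that $|V(J_R)|=q^{O(d)}$ and $\log|V(J_R)|=\Theta(d\log q)$. The target ratio $\exp(\Omega(\sqrt{\log n}))$ should then come out as $q^{\Omega(1)}$ or as $2^{\Omega(d)}$.

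\textbf{Construction.} Build $J_R$ as a layered DAG whose layers are indexed by evaluation points $a_1,\dots,a_m\in\F_q$. The vertices of layer $j$ correspond to candidate values of $p(a_j)$, possibly together with short prefixes of the encoding. Add ``hub'' vertices $h_f$ for low-degree polynomials $f$, each connected cheaply to the vertices $(a_j,f(a_j))$ of every layer. The demand sequence reveals the pairs $(a_j,p(a_j))$ one at a time, for $j=1,\dots,m$ with $m>d$, and asks to connect a source attached to layer $j$ to a sink encoding $p(a_j)$. The offline optimum buys the path through the single hub $h_p$ and then pays $O(1)$ per demand, so $\OPT\le B=O(m+\text{hub cost})$.

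\textbf{Lower bound for the online algorithm.} The $d$-wise independence of the evaluations of a random degree-$(<d)$ polynomial means that, before step $j\le d$, the value $p(a_j)$ is uniform and independent of the past. Edges that the algorithm bought in earlier steps can therefore be reused for demand $j$ only if they happen to cover the correct value. I would carry the argument in the following order.
\begin{enumerate}
\item Formalize ``cheaply reusable'': after $t$ steps the purchased subgraph defines, for each layer, a small list of values it already serves at low marginal cost.
\item Apply a list-recovery bound for Reed--Solomon codes. If every layer has a list of size $L\ll q$, then few degree-$(<d)$ polynomials agree with the lists on many coordinates. Hence with high probability the hidden $p$ agrees with the lists on only a small fraction of the layers.
\item Conclude a charging argument. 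On each layer where $p$ is not covered, the algorithm must buy fresh edges of cost roughly the per-demand cost of a non-hub route. If instead it preemptively bought large lists, it paid for them directly.
\end{enumerate}
Balancing the list size $L$ against the costs gives a ratio of $q^{\Omega(1)}$ between the online cost and $\OPT$.

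\textbf{Main obstacle.} I expect the hardest step to be the design of the gadget's cost structure in the unit-cost DAG. Two requirements pull against each other. The hub must be cheap for $\OPT$, which knows $p$. Yet partial, hedged purchases by the online algorithm must not create cheap shortcuts that serve many candidate values at once. In particular, the online algorithm could start buying hub paths for many candidate $f$ once it has seen $d-1$ evaluations, so $m$, $d$, and the hub costs must be chosen so that guessing remains expensive until $p$ is nearly determined. This probably requires recursive or iterated composition of the gadget, with degrees and field sizes chosen so that the loss compounds to $\exp(\Omega(\sqrt{\log n}))$. Making the list-recovery bound apply to \emph{arbitrary} purchased subgraphs, rather than to explicit lists, is where I would concentrate effort. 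My first attempt would define each layer's list as the set of values reachable within cost budget $\tau$ from already-bought edges, and use a Markov-type argument to bound its size.
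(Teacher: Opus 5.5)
\textbf{The construction has a genuine gap that no gadget tuning can fix.}

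Your demand sequence is a function of a univariate $p$ of degree less than $d$, revealed through its values at distinct points $a_1,\dots,a_m$. After $d$ demands, $p$ is determined, and so is the entire remaining sequence. An online algorithm with unbounded computation can therefore do the following. For each of the first $d$ demands it buys a cheapest $s_j$--$t_j$ path; each costs at most $\OPT(\sigma)$, because the optimal subgraph contains such a path. It then buys an optimal offline solution for the now-known sequence. This deterministic strategy is $(d+1)$-competitive on every sequence in your distribution, whatever the hub costs, layers, or encodings.

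With your parameters ($d\approx\sqrt{\log R}$, $\log n=\Theta(\log R)$), this caps the ratio at $O(\sqrt{\log n})$. Since you have a hub for each of the $q^d$ candidate polynomials, $d\le\log_2 n$ always holds, so no parameter choice gets past $O(\log n)$. The obstacle you flag in your last paragraph is therefore fatal, and ``recursive composition'' is not specified enough to repair it.

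Your step~2 has a related flaw. It takes probability over $p$, but the algorithm's lists are built from already revealed values of $p$ and are not independent of it. Once $p$ is determined, the lists can simply contain $p$'s values, so ``$p$ agrees with the lists on few layers'' does not follow from list recovery.

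\textbf{What the paper does instead.} Since a sequence determined after $k$ demands admits ratio at most $k+1$, the future must stay undetermined for roughly as many rounds as the target ratio.
\begin{itemize}
\item \emph{Fresh randomness in every round.} The paper uses a $d$-variate polynomial $P$ of total degree $d=\log_2R$. Since $\dim\cP=\binom{2d}{d}\geq R$, there are anchors $a_1,\dots,a_R$ whose values $b_t=P(a_t)$ are jointly uniform (\cref{lem:independent-evaluations}). So each $b_t$ is uniform given the whole past.
\item \emph{Offline saving from a committed table, not a hub.} Offline buys point labels $P(x)$ at every $x\in\F_q^d$ once, plus one line label $P|_\ell$ per line $\ell$ through $a_t$ per round. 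Round $t$ requires line labels with $g(a_t)=b_t$ that agree with a purchased point label at every point of $\ell$.
\item \emph{List recovery applied to old lists.} \Cref{lem:johnson} is applied per line to the algorithm's \emph{old}, truncated point-label lists. It bounds the prolific line polynomials by $2^{21}R^3$. This set is fixed before round $t$, so the fresh $b_t\in\F_q$, with $q=R^{10}$, rules all of them out with probability at least $3/4$ (\cref{lem:prolific-list}).
\item \emph{Cost accounting.} Each noncrowded line then costs $\Omega(Rq)$ in that round (\cref{lem:line-tradeoff}). This gives $\Omega(M)$ per round and $\Omega(RM)$ in total against $\OPT<M$, i.e.\ ratio $\Omega(R)$, while $\log n=O(\log^2R)$.
\end{itemize}
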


\paragraph{Our proof technique.}
The proof passes through an intermediate label-covering problem that we call \emph{weighted online Min-Rep}, whose offline variant (Min-Rep) was used in the context of offline hardness for network design problems~\cite{Kortsarz2001}.
The reduction of this problem to DSN is straightforward (see \cref{fig:realization-gadget}).

The central part of the proof uses concepts known in algebraic coding theory, but applies them in a novel way to online competitive analysis.
Our hard instance of
weighted online Min-Rep
is defined using
a vector space $\F_q^d$ over a finite field $\F_q$,
and a hidden low-degree polynomial $P$ over $\F_q^d$.
There is a public sequence $a_1, ..., a_R \in \F_q^d$ of \emph{anchor points},
and over rounds $t=1,...,R$ the values $b_t = P(a_t)$ of the polynomial at these anchor points are revealed.
In round $t$ we consider the set of all \emph{affine lines} $\ell$ through $a_t$ (which partition $\F_q^d \setminus \{a_t\}$).

At a high level,
a solution must purchase \emph{point-labels} $z \in \F_q$ for every point $x \in \F_q^d$,
as well as,
at every round $t$ and every affine line $\ell$ through $a_t$,
\emph{line-polynomial labels} corresponding to low-degree univariate polynomials $p$ on $\ell$ with $p(a_t)=b_t$.
The coverage requirement in round $t$
is that for every point $x \in \F_q^d$
the solution must have purchased a point-label $z \in \F_q$,
and a line-polynomial-label $p$ on the line containing $x$,
that are {compatible}: $p(x)=z$.

An offline optimal solution knows $P$ and can do this cheaply by purchasing $P(x)$ for every $x \in \F_q^d$ and the restriction $P|_\ell$ for every $\ell$.
In particular, the offline solution only makes point-label purchases in round 1, and afterwards only purchases one polynomial-label for each round and line.

On the other hand, an online solution cannot learn anything about future anchor values $b_t$ by observing past ones $b_1,...,b_{t-1}$.
Consider an average line $\ell$.
If the solution has not already overpaid by purchasing many point-labels,
it is now faced with a tradeoff between purchasing a new point-label for many points $x \in \ell$, or purchasing many new polynomial-labels on $\ell$.
Ideally, it would have liked to purchase few new polynomial-labels on $\ell$ that together cover most points on $\ell$ using these points' old point-labels.
However, a Johnson-type~\cite{Johnson1962} list-recovery bound shows that such \emph{prolific} polynomials are very few (much fewer than $q$).
With high probability, no such polynomial is actually allowed for use:
their set is determined before round $t$,
but in round $t$ only polynomials with $p(a_t)=b_t$ are allowed,
and $b_t \in \F_q$ is a fresh uniformly random value.
Thus, the online solution must overpay,
while the offline solution has the very prolific $P|_\ell$ available to it.

We give a more detailed technical overview in
\cref{sec:online-lb-tech-overview}.

\paragraph{Consequences for other problems.}
Our proof technique is not limited to DSN and may be of independent interest.
Of course, \cref{thm:main} also extends to the strictly more general
buy-at-bulk, prize-collecting, or pairwise-spanner variants
of directed network design.
Since all source-to-sink paths in our hard graph have the same length,
the same extends also to online directed pairwise exact-distance preservers~\cite{GrigorescuLinQuanrud2021}.

Our technique can also be applied beyond network design;
we highlight one such consequence here.
Koufogiannakis and Young~\cite{KoufogiannakisYoung2013} study a general online
covering problem in which a solution vector may only increase as arbitrary
upward-closed constraints arrive, and its cost is a monotone submodular
function of that vector.  Our intermediate lower bound
(for weighted online Min-Rep)
yields the following
consequence for their framework.
We give the formal reduction in
Appendix~\ref{app:other-consequences}.

\begin{corollary}[Online submodular-cost covering]
\label{cor:submodular-cost-covering}
For infinitely many $m$, every randomized online submodular-cost covering
algorithm has competitive ratio
\[
  \exp\!\bigl(\Omega(\sqrt{\log m})\bigr),
\]
where $m$ is the number of variables, even against an oblivious
adversary.
\end{corollary}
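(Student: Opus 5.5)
We prove \cref{cor:submodular-cost-covering} by embedding weighted online Min-Rep, the intermediate problem behind \cref{thm:main}, into the Koufogiannakis--Young framework. We then transfer the $\exp(\Omega(\sqrt{\log N}))$ lower bound, where $N$ is the size of the Min-Rep label universe, to a bound in terms of the number of variables $m$. We take as given the lower bound for weighted online Min-Rep established in the body of the paper. It says that for randomized algorithms against an oblivious adversary (the hidden polynomial $P$ is drawn before the algorithm's random tape), the expected online cost is $\exp(\Omega(\sqrt{\log N}))$ times the offline optimum. Here $N$ is polynomially related to $q^d$, with $q,d$ chosen so that $d\log q=\Theta(\log R)$ and the gap is $\exp(\Omega(\sqrt{\log R}))$.

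\emph{Step 1 (variables and cost).} Introduce one variable $x_\lambda\in\{0,1\}$ for each purchasable label $\lambda$. The labels are point-labels $(x,z)$ and line-polynomial labels $(t,\ell,p)$, so $m=N$. Take as cost the weighted sum $f(x)=\sum_\lambda w_\lambda x_\lambda$ of the Min-Rep weights. This is linear, hence monotone submodular. If the Min-Rep weighting is instead defined by a coverage-type or capped cost, such as point-labels paid once across rounds, we use that function directly; any coverage function is still monotone submodular. To stay within the framework's continuous domain, we take $x\in\mathbb{R}_{\ge 0}^m$ and round any fractional purchases via thresholding with a constant loss, or simply note that the upward-closed constraints defined below are only satisfied by vectors dominating a $0/1$ support.

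\emph{Step 2 (constraints).} Each round-$t$ coverage requirement at a point $x$ is the condition ``there exist a point-label $z$ and a line-polynomial label $p$ on the line through $x$ and $a_t$ with $p(x)=z$, both bought.'' This translates to the constraint
\[
  \bigvee_{z,\,p:\,p(x)=z}\bigl(x_{(x,z)}\ge 1 \wedge x_{(t,\ell,p)}\ge 1\bigr),
\]
which is upward-closed. Constraints arrive in the same order as the Min-Rep requests, and monotonicity of the solution vector matches the irrevocable purchases. An online covering algorithm therefore yields an online Min-Rep algorithm of identical cost, and the offline optima coincide.

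\emph{Step 3 (parameters).} Since $m=N$ is polynomial in $|V(J_R)|$, or at least $\log m=\Theta(\log N)$, the gap $\exp(\Omega(\sqrt{\log N}))$ becomes $\exp(\Omega(\sqrt{\log m}))$. This holds for the infinitely many $m$ arising from powers of two $R$.

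The main obstacle is Step 1's bookkeeping. We must check that the exact Min-Rep cost model, including its weights and any requirement that a point-label be bought once yet reused, is expressed faithfully by a monotone submodular function, and that the number of labels stays polynomial so that $\log m$ does not blow up. The first thing to check is that all weights are nonnegative and that the cost is additive over labels, in which case linearity settles it.
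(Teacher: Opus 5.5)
Your proposal is correct, but it encodes Min-Rep into the Koufogiannakis--Young framework differently from the paper.

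\textbf{Your encoding.} You use one variable per label $(w,\lambda)$ with the modular cost $\sum_{(w,\lambda)} c(w,\lambda)\,x_{(w,\lambda)}$. This is exactly the Min-Rep cost, since each distinct label is paid once and then reused, so your hedge about coverage-type costs is unnecessary. You then place the AND--OR structure of each Min-Rep constraint $((t,b_t,\ell),x)$, $x\in\ell$, inside the upward-closed constraint itself.

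\textbf{The paper's encoding.} It introduces one variable $y_a$ per action $a_{e,\lambda,\mu}$, that is, per possible constraint $e=(u,v)$ and compatible pair $(\lambda,\mu)\in\mathcal R_{uv}$. It reveals only the plain disjunction $\max_{a\in\mathcal A_e}y_a\geq 1$. The sharing of labels moves into the cost, which becomes the weighted coverage function $f(y)=\sum_r c(r)\max(\{y_a:r\in\mathcal Q(a)\}\cup\{0\})$.

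\textbf{Comparison.} Both encodings are legitimate, since the framework allows arbitrary upward-closed constraints and any monotone submodular cost. Your version shows that the lower bound needs no genuinely submodular cost: linear costs already suffice. It also uses somewhat fewer variables, $Nq+RqHq^d$ rather than $RHq^{d+2}$. The paper's version shows that the bound holds even when every constraint is a set-cover-type disjunction, with all the difficulty pushed into the submodular cost.

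\textbf{Corrections to your parameter bookkeeping.} In the paper $q=R^{10}$ and $d=\log_2 R$, so $d\log q=\Theta((\log R)^2)$, not $\Theta(\log R)$. The Min-Rep gap is $\Theta(R)$, namely more than $5R/128$ by \cref{prop:distributional} and \cref{lem:offline}, not $\exp(\Omega(\sqrt{\log R}))$. The check you deferred is then one line. By \cref{lem:hidden-instance-properties} and $H\leq 2q^{d-1}$, your variable count is $m=Nq+RqHq^d\leq q^{d+1}+2Rq^{2d}$, with the paper's $N=q^d$. Hence $\log m=O((\log R)^2)$ and $5R/128\geq\exp(c\sqrt{\log m})$.

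\textbf{Two smaller points.} The thresholding $x_j\mapsto\mathbf 1[x_j\geq 1]$ loses nothing and is monotone over time; alternatively, simply take the domains to be $\{0,1\}$. Also, the paper states the Min-Rep bound only for deterministic strategies. The randomized, oblivious-adversary version follows by the same averaging over the random tape as in the proof of \cref{thm:main}.
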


Finally, it is natural to ask whether our results can extend to undirected network design,
to the single-source (Directed Steiner Tree) setting,
or to fractional solutions.
All three answers are negative.
For online undirected Steiner Forest,
there is a
$O(\log n)$-competitive algorithm
due to
Berman and Coulston~\cite{BermanC97};
and for Directed Steiner Tree,
Chakrabarty, Ene, Krishnaswamy, and Panigrahi~\cite[Corollary 22]{ChakrabartyEtAl2018}
gave
a (quasipolynomial-time) online algorithm
with $\mathrm{polylog}(n)$ competitive ratio.\footnote{
    In fact it's a well-known open problem to give such an algorithm with polynomial runtime, even offline;
    one can get $\widetilde{O}(k^\epsilon)$ for any $\epsilon>0$ in polynomial time.
}
As for fractional solutions,
the online fractional covering framework of
Alon, Awerbuch, Azar, Buchbinder, and Naor~\cite{AlonEtAl2006}
and Buchbinder and Naor~\cite{BuchbinderNaor2009}
allows one to maintain such a non-decreasing fractional solution online
with competitive ratio $O(\log n)$.\footnote{
  Indeed, this is consistent with the structure of the known online algorithms for
  \DSN{}~\cite{ChakrabartyEtAl2018,GrigorescuLinQuanrud2021,BodwinLe2025},
  all of which solve such a relaxation online at a polylogarithmic loss
  and incur their polynomial factors in rounding.
}
Thus it is only the combination
of directedness, multiple source-sink pairs, and integrality that makes online \DSN{} superpolylogarithmically hard.

\paragraph{Outline.}

In \cref{sec:related-work} we discuss related work, and in particular
compare our approach with the offline hardness proof of Dodis and Khanna.
\cref{sec:online-lb-tech-overview} gives a high-level overview and explanation of the proof.
In \cref{sec:preliminaries} we formally define the considered problems,
some used notions,
and develop the mostly coding-theoretic tools
(for which we give simple, self-contained proofs).
In \cref{sec:hidden-polynomial}
we define our hard instance (distribution),
and in \cref{sec:distributional-lower-bound}
we show that any randomized online algorithm
incurs a high expected cost on that distribution.

\subsection{Related work}
\label{sec:related-work}

\paragraph{Comparison with the offline hardness of Dodis and Khanna.}
Our proof and the offline hardness proof of Dodis and
Khanna~\cite{DodisKhanna1999} both pass via a
Min-Rep-type label covering problem, followed by a reduction to \DSN{}.
Our reduction has extra ingredients to obtain unit-cost graphs,
but both reductions are straightforward.
The main difference lies in the hardness construction for the
label-covering problem.
Dodis and Khanna call the intermediate problem \emph{Symmetric Label Cover}
and invoke its known hardness via Label Cover~\cite{KhannaSudanTrevisan1997}.
Khanna, Sudan, and Trevisan~\cite{KhannaSudanTrevisan1997} in turn attribute
Total Label Cover hardness to the low-error multiprover proof systems of Raz and
Safra~\cite{RazSafra1997} and Arora and Sudan~\cite{AroraSudan1997}.
These PCP-based hardness arguments separate satisfiable instances, where
one label per variable suffices, from unsatisfiable instances, where every
cover must use many labels.  The distinction
between cheap and expensive instances is thus computational: the cheap
solution, when it exists, is determined by the input but hard to find.

Our lower bound for weighted online Min-Rep has to be of a
different nature.  An online algorithm with unlimited computation can solve
every prefix it has seen optimally, so the cheap solution cannot be hidden
computationally; it must be hidden in the future of the sequence.
Accordingly, every prefix of our constraint sequence has many cheap
solutions, one per polynomial consistent with the revealed anchor values.
The difficulty is choosing point labels that can be reused cheaply in later
rounds.  The hidden polynomial $P$ provides the analogue of completeness,
but our lower-bound argument differs from PCP soundness in two ways.
First, PCP soundness rules out cheap solutions on unsatisfiable instances,
whereas our bound concerns reuse of the algorithm's old point labels.
On a fixed line, list recovery (\cref{lem:johnson}) bounds the number of
polynomials that agree with the old point-label lists at many points whose
lists are short.  This list of prolific polynomials is determined before
round $t$, and with high probability the fresh anchor value $b_t$ makes
all of them unavailable as line labels in that round
(\cref{lem:prolific-list}).  Meanwhile, the hidden polynomial still
certifies a cheap offline solution using its own point labels $P(x)$
together with the line restrictions $P|_\ell$.
Second, we use only an elementary Johnson-type bound, whereas the low-error PCP constructions
must encode an NP-hard problem.  We are free to choose the algebraic
instance because no computational hardness is needed.  The low-degree
tests cited above also use polynomials, lines, and list-size bounds, but
to certify that a table is close to a polynomial; our approach does not
involve tests or certification.

\paragraph{Coding-theoretic online lower bounds.}
Böckenhauer, Hromkovič, Komm, Krug, Smula, and Sprock~\cite{BockenhauerEtAl2014} use the fact that a bounded
collection of advice strings cannot closely match every possible input string
to prove advice lower bounds for Online Set Cover and a relaxed Online Maximum
Clique objective.  Angelopoulos and Kamali~\cite{AngelopoulosKamali2023}
similarly exploit the ambiguity left by erroneous advice for online search,
bidding, and fractional knapsack.  Both results concern bounded or imperfect
advice, rather than unrestricted randomized algorithms.

\paragraph{Hidden functions under online erasures.}
A closer analogue of our hidden input appears in property testing with online
erasures.  Kalemaj, Raskhodnikova, and Varma~\cite{KalemajRaskhodnikovaVarma2023} and Ben-Eliezer, Kelman, Meir, and Raskhodnikova~\cite{BenEliezerEtAl2024} hide random linear or low-degree functions by
adaptively erasing values inferable from the tester's previous queries.  These
are query-complexity lower bounds against an adaptive erasure adversary, rather
than competitive lower bounds against an obliviously generated input sequence.

\paragraph{Polynomial bounds when the graph also arrives online.}
A \emph{reachability preserver} for a set of terminal pairs is a subgraph
containing a path for each pair, that is, a feasible solution to unit-cost
\DSN{}.  Polynomial bounds are known in a stronger, nonstandard online model:
Bodwin, Hoppenworth, and Trabelsi~\cite{BodwinHoppenworthTrabelsi2023}
showed that
if the adversary may add edges to the input graph before each request, it can
force every online reachability preserver for $p$ requests on $n$ vertices to
use $\Omega((np)^{2/3}+n)$ edges.

\section{Technical overview}
\label{sec:online-lb-tech-overview}

We give a more detailed but still high-level and intuitive explanation of our online hardness proof.

\paragraph{The intermediate problem and its network realization.}
The proof passes through an intermediate covering problem that we call
\emph{weighted online Min-Rep} (\cref{def:weighted-online-min-rep}).
In this problem
there is a fixed bipartite set of left and
right variables.  Each variable has a set of labels; buying label $\lambda$
of variable $w$ has a positive integer cost $c(w,\lambda)$.  A possible
constraint between a left variable $u$ and a right variable $v$ specifies a relation
$\mathcal R_{uv}$ of compatible label pairs.
Constraints $(u,v)$ arrive online.
After each arrival, the algorithm must irrevocably buy labels
so that the constraint is satisfied,
which means that the labels bought for $u$ and $v$ must contain a
compatible pair.
A bought label may be reused for every constraint incident
to its variable.
The objective to minimize is the total cost of the labels
bought.

This problem reduces directly to online unit-cost \DSN{}.
See \Cref{fig:realization-gadget}.
Replace every
label by a directed chain whose length is its cost.  For every possible
constraint $(u,v)$, add private source and sink terminals and connectors so
that every source--sink path must traverse one label chain of $u$, a connector
for a compatible pair, and one label chain of $v$.
Conversely, any compatible pair gives
such a path.
See \cref{lem:realization} for more details.

\begin{figure}[t]
  \centering
  \begin{tikzpicture}[
      x=1cm,
      y=1cm,
      >={Latex[length=2.1mm,width=1.4mm]},
      every node/.style={font=\small},
      terminal/.style={
        rectangle,
        rounded corners=2pt,
        draw=black!65,
        fill=white,
        line width=0.7pt,
        minimum width=1.45cm,
        minimum height=8mm,
        inner sep=2pt,
        font=\scriptsize
      },
      endpoint/.style={
        circle,
        draw=black!55,
        fill=white,
        line width=0.6pt,
        minimum size=3.2mm,
        inner sep=0pt
      },
      selected endpoint/.style={
        endpoint,
        draw=teal!75!black,
        fill=teal!15,
        line width=1pt
      },
      chain/.style={-{Latex[length=1.8mm,width=1.2mm]},draw=black!45,line width=0.65pt},
      connector/.style={-{Latex[length=1.8mm,width=1.2mm]},draw=black!32,line width=0.6pt},
      selected/.style={-{Latex[length=2.1mm,width=1.4mm]},draw=teal!75!black,line width=1.25pt},
      layer title/.style={font=\small\bfseries,text=black!70,align=center}
    ]
    \fill[black!3,rounded corners=2pt] (-0.75,-2.05) rectangle (1.05,2.25);
    \fill[blue!3,rounded corners=2pt] (1.25,-2.05) rectangle (5.65,2.25);
    \fill[orange!4,rounded corners=2pt] (6.15,-2.05) rectangle (10.55,2.25);
    \fill[black!3,rounded corners=2pt] (10.75,-2.05) rectangle (12.55,2.25);

    \node[layer title] at (0.15,1.93) {source};
    \node[layer title] at (3.45,1.93) {left-label chains of $u$};
    \node[layer title] at (8.35,1.93) {right-label chains of $v$};
    \node[layer title] at (11.65,1.93) {sink};

    \node[terminal] (source) at (0.15,0)
      {$\operatorname{source}_{u,v}$};
    \node[terminal] (sink) at (11.65,0)
      {$\operatorname{sink}_{u,v}$};

    \node[endpoint] (ul-top-minus) at (1.75,1.2) {};
    \node[endpoint] (ul-top-plus) at (5.15,1.2) {};
    \node[selected endpoint] (ul-minus) at (1.75,0) {};
    \node[selected endpoint] (ul-plus) at (5.15,0) {};
    \node[endpoint] (ul-bottom-minus) at (1.75,-1.2) {};
    \node[endpoint] (ul-bottom-plus) at (5.15,-1.2) {};

    \node[endpoint] (vr-top-minus) at (6.65,1.2) {};
    \node[endpoint] (vr-top-plus) at (10.05,1.2) {};
    \node[selected endpoint] (vr-minus) at (6.65,0) {};
    \node[selected endpoint] (vr-plus) at (10.05,0) {};
    \node[endpoint] (vr-bottom-minus) at (6.65,-1.2) {};
    \node[endpoint] (vr-bottom-plus) at (10.05,-1.2) {};

    \draw[chain] (ul-top-minus) -- node[midway,fill=blue!3,inner sep=1pt] {$\cdots$} (ul-top-plus);
    \draw[selected] (ul-minus) -- node[midway,fill=blue!3,inner sep=1pt] {$\cdots$} (ul-plus);
    \draw[chain] (ul-bottom-minus) -- node[midway,fill=blue!3,inner sep=1pt] {$\cdots$} (ul-bottom-plus);
    \draw[chain] (vr-top-minus) -- node[midway,fill=orange!4,inner sep=1pt] {$\cdots$} (vr-top-plus);
    \draw[selected] (vr-minus) -- node[midway,fill=orange!4,inner sep=1pt] {$\cdots$} (vr-plus);
    \draw[chain] (vr-bottom-minus) -- node[midway,fill=orange!4,inner sep=1pt] {$\cdots$} (vr-bottom-plus);

    \draw[connector] (source) -- (ul-top-minus);
    \draw[selected] (source) -- (ul-minus);
    \draw[connector] (source) -- (ul-bottom-minus);
    \draw[connector] (ul-top-plus) -- (vr-minus);
    \draw[connector] (ul-bottom-plus) -- (vr-top-minus);
    \draw[selected] (ul-plus) -- (vr-minus);
    \draw[connector] (vr-top-plus) -- (sink);
    \draw[selected] (vr-plus) -- (sink);
    \draw[connector] (vr-bottom-plus) -- (sink);

    \node[above=2pt,text=black!65] at (3.45,1.2) {$(u,\lambda')$};
    \node[above=2pt,text=teal!75!black] at (3.45,0) {$(u,\lambda)$};
    \node[above=2pt,text=black!65] at (3.45,-1.2) {$(u,\lambda'')$};
    \node[above=2pt,text=black!65] at (8.35,1.2) {$(v,\mu')$};
    \node[above=2pt,text=teal!75!black] at (8.35,0) {$(v,\mu)$};
    \node[above=2pt,text=black!65] at (8.35,-1.2) {$(v,\mu'')$};

    \node[below=3pt,text=teal!75!black] at (3.45,0) {$c(u,\lambda)\text{ unit edges}$};
    \node[below=3pt,text=teal!75!black] at (8.35,0) {$c(v,\mu)\text{ unit edges}$};
    \node[above=3pt,text=teal!75!black] at (5.9,0)
      {$(\lambda,\mu)\in\mathcal R_{uv}$};
  \end{tikzpicture}
  \caption{Realizing one weighted online Min-Rep constraint $(u,v)$ as a
  \DSN{} demand, done as part of the reduction of \cref{lem:realization}.  Each arrow containing $\cdots$ is a unit-edge label
  chain.  The highlighted path selects a compatible pair
  $(\lambda,\mu)\in\mathcal R_{uv}$.  Label chains are shared across
  constraints; terminals and connectors (the edges not on chains) are private to $(u,v)$.}
  \label{fig:realization-gadget}
\end{figure}

\paragraph{Independent evaluations from a hidden polynomial.}
Fix a large power of two $R$, set $d=\log_2 R$ and $q=R^{10}$, and let $\cP$
be the space of $d$-variate polynomials over $\F_q$ of total degree at most
$d$.
Since $\dim\cP\geq R$, there exist public \emph{anchor points}
$a_1,\ldots,a_R\in\F_q^d$ whose evaluation maps are linearly independent
(this is related to the coding-theoretic notion of \emph{information set}; see \cref{lem:independent-evaluations}).
We draw a uniformly random $P\in\cP$.
The values
\[
  b_t=P(a_t),\qquad t=1,\ldots,R,
\]
are then independent and uniform in $\F_q$.  Round $t$ reveals $b_t$.
These values will determine which constraints arrive in each round.
Knowing $b_1, ..., b_{t-1}$ gives no information about $b_t$,
which is the source of the hard uncertainty;
on the other hand, the offline optimum can know $P$.

The logarithmic degree $d$ gives enough linearly-independent evaluations while
keeping the final graph size quasipolynomial in $R$.

\paragraph{Affine lines through an anchor.}
For a point $a\in\F_q^d$ and a nonzero direction
$v\in\F_q^d$, the affine line through $a$ in direction $v$ is
\[
  \ell=\{a+sv:s\in\F_q\}.
\]
It has $q$ points and the above formula parametrizes it by $s\in\F_q$.
  We call its other $q-1$ points the \emph{nonanchor points}.  Rescaling $v$
  gives another parametrization of the same line.
  
The lines through a fixed
  anchor $a_t$ intersect only at $a_t$ and partition
$\F_q^d\setminus\{a_t\}$.  In particular, there are
\[H=(q^d-1)/(q-1)\] such lines.

Restricting a multivariate polynomial $P$ of
total degree at most $d$ to $\ell$ gives the univariate polynomial
$s\mapsto P(a_t+sv)$, again of degree at most $d$, which is denoted $P|_\ell$.
Conversely, when we speak
of a degree-at-most-$d$ \emph{line polynomial} on $\ell$, we mean a function that becomes
such a univariate polynomial under a fixed parametrization sending $0$ to the
anchor.

\paragraph{The hidden-polynomial Min-Rep instance.}
We now define our hard weighted online Min-Rep instance.
Write $N=q^d$.  For every point
$x\in\F_q^d$, take $x$ itself as a right variable.  Its labels (\emph{point labels}) are the
possible values $z\in\F_q$, each of cost $c_p=100R$; buying $z$ records that
value at $x$.
For every round $t$, possible anchor value $b\in\F_q$, and
affine line $\ell$ through $a_t$, take the triple $(t,b,\ell)$ as a left
variable.  Its labels are the degree-at-most-$d$ line polynomials $g$ on $\ell$
satisfying $g(a_t)=b$, each of cost $c_r=100q$.  For every $x\in\ell$, the
possible constraint between $(t,b,\ell)$ and $x$ declares $(g,z)$ compatible
exactly when $g(x)=z$.  \Cref{fig:hidden-polynomial-gadget} shows this
Min-Rep constraint.  
This defines the public (known in advance) part of the instance.
It remains to define the online constraint sequence.

\begin{figure}[t]
  \centering
  \begin{tikzpicture}[
      x=1cm,
      y=1cm,
      every node/.style={font=\small},
      polynomial label/.style={
        rounded corners=2pt,
        draw=blue!55!black,
        fill=blue!12,
        line width=0.75pt,
        minimum width=4.75cm,
        minimum height=7mm,
        align=center,
        inner sep=2pt
      },
      value label/.style={
        rounded corners=2pt,
        draw=orange!70!black,
        fill=orange!16,
        line width=0.75pt,
        minimum width=2.85cm,
        minimum height=7mm,
        align=center,
        inner sep=2pt
      },
      compatibility/.style={draw=black!35,line width=0.7pt},
      selected compatibility/.style={draw=teal!72!black,line width=1.3pt},
      layer title/.style={font=\small\bfseries,text=black!70,align=center}
    ]
    \fill[blue!4,rounded corners=3pt] (1.55,-1.9) rectangle (7.3,2.1);
    \fill[orange!5,rounded corners=3pt] (8.25,-1.9) rectangle (12.1,2.1);

    \node[layer title] at (0.22,0.25) {line-polynomial\\labels};
    \node[font=\small\bfseries,text=blue!55!black] at (0.22,-0.45)
      {$\Lambda_{t,b,\ell}$};
    \node[layer title] at (13.42,0.25) {point-value\\labels};
    \node[font=\small\bfseries,text=orange!75!black] at (13.42,-0.45)
      {$\Lambda_x=\F_q$};

    \node[font=\scriptsize,text=black!65] at (4.42,1.46)
      {$g:\ell\to\F_q,\quad \deg g\leq d,\quad g(a_t)=b$};
    \node[font=\scriptsize,text=black!65] at (10.17,1.46)
      {possible values at $x\in\ell$};

    \node[polynomial label,draw=teal!72!black,line width=1.2pt]
      (g) at (4.42,0.85) {$g$};
    \node[polynomial label] (g-prime) at (4.42,0) {$g'$};
    \node[polynomial label] (g-double) at (4.42,-0.85) {$g''$};

    \node[value label] (z-prime) at (10.17,0.85) {$z'$};
    \node[value label,draw=teal!72!black,line width=1.2pt]
      (z) at (10.17,0) {$z$};
    \node[value label] (z-double) at (10.17,-0.85) {$z''$};

    \node[font=\scriptsize,text=blue!55!black] at (4.42,-1.55)
      {cost $c_r=100q$ per label};
    \node[font=\scriptsize,text=orange!75!black] at (10.17,-1.55)
      {cost $c_p=100R$ per label};

    \draw[selected compatibility] (g.east) --
      node[midway,above=2pt,fill=white,inner sep=1.2pt,font=\scriptsize]
      {$g(x)=z$}
      (z.west);
    \draw[compatibility] (g-prime.east) -- (z-double.west);
    \draw[compatibility] (g-double.east) -- (z.west);
  \end{tikzpicture}
  \caption{One weighted online Min-Rep constraint between $(t,b,\ell)$ and
  $x$.  A middle edge joins a compatible pair.  The highlighted pair $(g,z)$
  covers the constraint because $g(x)=z$.}
  \label{fig:hidden-polynomial-gadget}
\end{figure}

The constraints arrive online in rounds $t = 1, ..., R$.
In each round we reveal a block of constraints,
and inspect the purchases made by an online algorithm
by the end of the round.

Recall that points $a_1,...,a_R \in \F_q^d$ are public,
and that we sampled a hidden polynomial $P$,
which defines values $b_t=P(a_t)$.
The block revealed in round $t$ contains those of the above constraints where $b=b_t$,
one for every line $\ell$ through $a_t$ and every $x\in\ell$.
(The
Min-Rep instance contains variables and constraints for every candidate $b$,
but only the block selected by the newly revealed $b=b_t$ arrives.)

For every realized sequence, its generating polynomial $P$ certifies a cheap
offline solution: buy the point label $P(x)$ once at every point and the line
label $P|_\ell$ on every requested line.
In particular, this solution  buys point labels only in round 1.
The point labels cost $100RN$, and
the line labels cost $100qRH=\Theta(RN)$ because $Hq=\Theta(N)$.  Their total
cost is less than $250RN$.  We use the slightly larger benchmark
\[
  M=300RN
\]
to have some slack for the reduction to \DSN{}.
Therefore, our goal is to establish a Min-Rep lower bound showing that every online
strategy has expected label cost $\Omega(RM)$
(that is, $\Omega(M)$ on average in every round).

\paragraph{Prolific polynomials.}
Just before round $t$, call point labels bought in earlier rounds \emph{old}.
They form an old-value list at each point.  Call a point \emph{heavy} if its old-value list
has more than $16R$ values, and call a line through $a_t$ \emph{crowded} if at least half its points are heavy.  
As long as the accumulated cost is below $RM=300R^2N$, the cost $100R$ per
point label permits fewer than $3RN$ point labels in total.
Intuitively, this means that we can focus our attention only on non-crowded lines and non-heavy points.

So fix a round $t$ and a line $\ell$.
Recall that the algorithm has to buy some line polynomials on this line,
and some new point labels, such that every point on $\ell$ has a label (old or new) covered by one of the line polynomials.
If the algorithm buys $\Omega(q)$ new point-labels every time, it will overpay.
So most points should be served using old labels.
Ideally, the algorithm now wants to purchase few line polynomials to serve these points; thus, it wants to buy \emph{prolific} polynomials, each of which will serve many points
(i.e., take one of the values in $x$'s old-list, for many $x$).
However, nonheavy points have short old-lists.
A crucial step now is a Johnson-type~\cite{Johnson1962}
list-recovery bound
that shows that there are few low-degree prolific polynomials.

More precisely,
on each line we call a degree-$d$ polynomial \emph{prolific} if it takes an old listed
value on at least $\rho q$ points, where $\rho=1/(256R)$.  The Johnson-type
bound (\cref{lem:johnson}) says that there are only $O(R^3)$ prolific polynomials.
Their values at $a_t$ form a candidate set determined
entirely before $b_t$ is revealed.
Since $b_t$ remains uniform
given the past,
and $q=R^{10}$,
this candidate set contains $b_t$
only with small probability.
Recall that the algorithm is only allowed to use line polynomials that have $p(a_t)=b_t$;
most likely, no prolific polynomial is allowed
(\cref{lem:prolific-list}).

\paragraph{The resulting line-label/point-label tradeoff.}
Fix a noncrowded line and suppose its candidate set misses $b_t$.  Every line
polynomial available in round $t$ is then nonprolific.  If the algorithm buys
$u$ such polynomials, together they can support old labels on fewer than
$u\rho q$ of the at least $q/2$ nonheavy points.  Thus at least
$q(1/2-\rho u)$ points require new point labels.  Since a line label
costs $100q$ and a point label costs $100R$, the resulting cost is at least
\[
  100q u + 100R q (1/2 - \rho u) =
  100q\bigl(u+R/2 - u/256\bigr)=\Omega(Rq)
\]
(\cref{lem:line-tradeoff}).

Distinct lines through $a_t$ have disjoint nonanchor points, so the line-label
and new point-label costs charged to different lines in one round are
disjoint.  A constant fraction of the $H$ lines are noncrowded, and each misses
the candidate set with high  probability.  The
conditional expected cost of an unfinished round is therefore
$\Omega(HRq)=\Omega(RN)=\Omega(M)$ (\cref{lem:round-increment}).  Summing these increments over $R$
rounds gives the desired bound.  If the final cost reaches $RM$
with probability at least $1/2$, its expectation is already $\Omega(RM)$.
Otherwise, every round starts with cost below $RM$ with probability greater than
$1/2$, so summing the conditional $\Omega(M)$ increment over all $R$ rounds
again gives expected cost $\Omega(RM)$.

\paragraph{Graph size.}
The number
$n_R$ of vertices in the final DSN graph satisfies
$\log n_R=O((\log R)^2)$.  The competitive gap $\Omega(R)$ is therefore
$\exp(\Omega(\sqrt{\log n_R}))$.

\paragraph{Discussion and open problems.}
The gap between $\exp(\Omega(\sqrt{\log n}))$ and the best upper bounds,
$O(n^{3/5+\varepsilon})$~\cite{BodwinLe2025} for unit costs and
$\widetilde O(k^{1/2+\varepsilon})$~\cite{ChakrabartyEtAl2018} in general, remains
large.  The exponent $\sqrt{\log n}$ is a limitation of our construction and
not of its analysis.\footnote{
Indeed, it is easy to check that an online algorithm
that buys all $q$ point labels of
a point when the point is first requested, and one arbitrary legal line label
for every line in every round,
gets competitive ratio $O(q)=R^{O(1)}$ on $J_R$
(it pays at most $100RNq+100qRH+3RHq=O(qRN)$, while the offline optimum has to buy at least $N$ point labels
and therefore pays at least $100RN$).
Since $\log n_R=\Theta(\log^2 R)$, the
competitive ratio is $\exp(\Theta(\sqrt{\log n_R}))$, and only a
different construction can improve \Cref{thm:main}.
}
The main question left open by our work is to narrow this gap:
Is there an information-theoretic lower bound
of $n^{\Omega(1)}$ for online \DSN{}?  Perhaps against deterministic algorithms only?
Somewhat more concretely,
our construction can be viewed as Reed--Solomon
codes on the lines of a Reed--Muller code; other locally structured codes may
do better.

\section{Preliminaries}
\label{sec:preliminaries}

The following parameters remain fixed throughout the paper. Let $R$ be a sufficiently large power of two.
Set
\[
  d=\log_2R,\qquad q=R^{10},\qquad N=q^d.
\]
Since $R$ is a power of two, $q$ is an even prime power, so the field $\F_q$ exists.

\paragraph{Online DSN.} We first formally define the online (unit-cost) DSN problem.

\begin{definition}[Online unit-cost \DSN]
An instance consists of a fixed directed graph $J=(V,A)$ in which every edge
has cost $1$.  Ordered pairs $(s_1,t_1),(s_2,t_2),\ldots$ arrive one at a time.
After each arrival, the algorithm may irrevocably buy edges and must ensure
that every revealed $t_i$ is reachable from $s_i$ in the purchased subgraph.
The offline optimum knows the entire demand sequence in advance.
\end{definition}

A randomized algorithm is $\gamma$-competitive against an oblivious adversary
if, for every fixed request sequence $\sigma$,
\[
  \E_\omega[\cost_{\mathcal A}(\sigma,\omega)]
  \leq \gamma\,\OPT(\sigma),
\]
where the random tape $\omega$ is independent of $\sigma$.  Our lower bound
places no restriction on the running time or computation used to choose
purchases.

\paragraph{Weighted online Min-Rep and reduction to DSN.}

We first separate the online combinatorial lower bound from its graph
realization.  Classical Min-Rep~\cite{Kortsarz2001} selects representative vertices from groups
so that every superedge has adjacent selected representatives.  
In addition, we allow each
representative to have a positive integer cost, and we reveal superedges online;
we call the resulting problem \emph{weighted online Min-Rep}.  We use the
following bipartite form.

\begin{definition}[Weighted online Min-Rep]
\label{def:weighted-online-min-rep}
An instance consists of disjoint sets $U$ and $V$ of left and right
variables.  Every variable $w\in U\cup V$ has a label set $\Lambda_w$,
and buying $\lambda\in\Lambda_w$ costs the positive integer $c(w,\lambda)$.\footnote{Labels are independent between variables, even when label sets for different variables are not disjoint. Buying a label $\lambda$ for $w$ does not buy $\lambda$ for any other variable.}
The set of possible constraints is indexed by a set
$E\subseteq U\times V$.  For each $(u,v)\in E$ there is one compatibility
relation
\[
  \varnothing\neq\mathcal R_{uv}\subseteq\Lambda_u\times\Lambda_v.
\]
In particular, there is at most one possible constraint on each ordered
left--right variable pair $(u,v)$.  

An online sequence will reveal constraints. When a constraint $(u,v)$ is revealed, the labels
already purchased, together with any labels bought in response, must contain
some $(\lambda,\mu)\in\mathcal R_{uv}$.  Label purchases are irrevocable, and
the cost is the total cost of the distinct labels purchased.
For a request sequence $\sigma$, write $\OPT_{\mathrm{MinRep}}(\sigma)$ for the
minimum total label cost of an offline solution.
\end{definition}

The reduction below is a variant of the standard Label-Cover-to-\DSN{}
gadget of Dodis and
Khanna~\cite{DodisKhanna1999}; see also
Chitnis, Feldmann, and Manurangsi~\cite{ChitnisFeldmannManurangsi2018}.
See \cref{sec:related-work} for a comparison.
In our formulation, each integer label cost is realized by a private
unit-cost chain of that length, each constraint has private terminals and
connectors, and we track how edge sets bought for a prefix of the demands
translate into label sets, so that the reduction applies to online algorithms.
See \Cref{fig:realization-gadget} for an illustration.
The construction is straightforward; we give a proof for completeness.

\begin{lemma}[Min-Rep-to-\DSN{} realization]
\label{lem:realization}
Every weighted online Min-Rep instance has a fixed unit-cost DAG $J$
with a demand
$(\operatorname{source}_{u,v},\operatorname{sink}_{u,v})$ for every
$(u,v)\in E$ with the following properties.
The construction represents each label by a dedicated directed path, called
its label chain.  For an edge set $F$, define
\[
  \Lambda(F):=\{(w,\lambda):
  \text{the entire label chain of }(w,\lambda)\text{ lies in }F\}.
\]
The map $F\mapsto\Lambda(F)$ is monotone, and for every requested constraint
set $S\subseteq E$:
\begin{enumerate}
  \item If $F$ connects the demands in $S$, then $\Lambda(F)$ satisfies $S$
  and
  \begin{equation}
    |F|\geq
    \sum_{(w,\lambda)\in\Lambda(F)}c(w,\lambda)+3|S|.
    \label{eq:realization-lower}
  \end{equation}
  \item Conversely, any label set $L$ satisfying $S$ gives an edge set that
  connects the demands in $S$ and has cost exactly
  \[
    \sum_{(w,\lambda)\in L}c(w,\lambda)+3|S|.
  \]
\end{enumerate}
Consequently, every online \DSN{} algorithm on $J$ induces, prefix by prefix,
a weighted online Min-Rep strategy whose label cost is at most its edge cost.
\end{lemma}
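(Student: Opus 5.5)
The plan is to build $J$ in four layers and then read both items off a rigid description of source--sink paths. For every variable $w\in U\cup V$ and label $\lambda\in\Lambda_w$, create a fresh directed path (the label chain of $(w,\lambda)$) with start vertex $\operatorname{in}(w,\lambda)$, end vertex $\operatorname{out}(w,\lambda)$, and exactly $c(w,\lambda)\geq 1$ unit edges. All internal vertices of a chain are private, so each has in- and out-degree one. For every $(u,v)\in E$, add fresh terminals $\operatorname{source}_{u,v}$ and $\operatorname{sink}_{u,v}$, together with:
\begin{itemize}
  \item a source edge $\operatorname{source}_{u,v}\to\operatorname{in}(u,\lambda)$ for every $\lambda\in\Lambda_u$;
  \item a connector edge $\operatorname{out}(u,\lambda)\to\operatorname{in}(v,\mu)$ for every $(\lambda,\mu)\in\mathcal R_{uv}$;
  \item a sink edge $\operatorname{out}(v,\mu)\to\operatorname{sink}_{u,v}$ for every $\mu\in\Lambda_v$.
\end{itemize}
Since there is at most one constraint per ordered pair $(u,v)$, each connector edge is attributed to a unique constraint. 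Thus source, connector and sink edges are private to their constraint and disjoint from all chain edges. The graph is a DAG: order vertices as sources, left chains, right chains, sinks.

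The key step is the structural claim that every $\operatorname{source}_{u,v}$--$\operatorname{sink}_{u,v}$ path has the form
\[
\text{source edge},\ \text{full chain of }(u,\lambda),\ \text{connector},\ \text{full chain of }(v,\mu),\ \text{sink edge},
\]
with $(\lambda,\mu)\in\mathcal R_{uv}$. To see this, follow the out-edges. From $\operatorname{source}_{u,v}$ the only out-edges go to starts of $u$-chains. Internal chain vertices force traversal of the whole chain. From $\operatorname{out}(u,\lambda)$ the only out-edges are connectors of constraints $(u,v')$, each leading into a $v'$-chain. From the end of a right chain the only out-edges are sink edges. Hence the path enters exactly one right chain $(v',\mu)$ and ends via a sink edge from $\operatorname{out}(v',\mu)$. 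Since it ends at $\operatorname{sink}_{u,v}$, and sink edges into $\operatorname{sink}_{u,v}$ leave only $v$-chains, we get $v'=v$. The connector used therefore runs from $\operatorname{out}(u,\lambda)$ to $\operatorname{in}(v,\mu)$, and such a connector exists only when $(\lambda,\mu)\in\mathcal R_{uv}$. I expect this path characterization to be the only point needing care: one must rule out paths that wander through connectors of other constraints $(u,v')$ and still reach $\operatorname{sink}_{u,v}$.

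Given the characterization, both items follow directly. For item 1, if $F$ connects every demand in $S$, each such path lies in $F$. Its two label chains are therefore entirely in $F$, so $\Lambda(F)$ contains a compatible pair for every $(u,v)\in S$. Moreover, $F$ contains at least one source, one connector and one sink edge private to each $(u,v)\in S$, giving $3|S|$ distinct non-chain edges. The chain edges of the labels in $\Lambda(F)$ are pairwise disjoint, which yields \eqref{eq:realization-lower}. For item 2, given $L$ satisfying $S$, take all chains of labels in $L$. For each $(u,v)\in S$, fix one compatible pair $(\lambda,\mu)$ in $L$ and add its source, connector and sink edges; this gives cost exactly $\sum_{(w,\lambda)\in L}c(w,\lambda)+3|S|$. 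Monotonicity of $F\mapsto\Lambda(F)$ is immediate from the definition.

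For the consequence, let $F_1\subseteq F_2\subseteq\cdots$ be the edge sets held by the \DSN{} algorithm after each demand, and let the Min-Rep strategy hold $\Lambda(F_i)$ after the $i$-th constraint. This choice is valid online for three reasons. It is monotone, so purchases are irrevocable. By item 1 it satisfies every revealed constraint. Its label cost is at most $|F_i|$ by \eqref{eq:realization-lower}.
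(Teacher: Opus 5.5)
Your proposal is correct and follows the paper's proof essentially step for step. It uses the same private label chains, terminals and source/middle/sink connectors, and the same forced-path characterization; you are more explicit than the paper in ruling out detours through connectors of other constraints $(u,v')$ via the sink edge. The remaining steps also match: the privacy count behind the $3|S|$ term, the topological order for acyclicity, and the induced strategy $\Lambda(F_i)$.
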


\begin{proof}
For each variable--label pair $(v,\lambda)$, create a directed path
\[
  v_\lambda^0\longrightarrow v_\lambda^1\longrightarrow\cdots
  \longrightarrow v_\lambda^{c(v,\lambda)}
\]
of length $c(v,\lambda)$.  All of its vertices and edges are private to that
label.  Write $v_\lambda^-=v_\lambda^0$ and
$v_\lambda^+=v_\lambda^{c(v,\lambda)}$.

For each $(u,v)\in E$, create private terminals
$\operatorname{source}_{u,v},\operatorname{sink}_{u,v}$.  Add a source
connector for every
$\lambda\in\Lambda_u$, a middle connector for every compatible pair
$(\lambda,\mu)\in\mathcal R_{uv}$, and a sink connector for every
$\mu\in\Lambda_v$:
\begin{equation}
  \operatorname{source}_{u,v}\longrightarrow u_\lambda^-,\qquad
  u_\lambda^+\longrightarrow v_\mu^-,\qquad
  v_\mu^+\longrightarrow\operatorname{sink}_{u,v}
  \label{eq:connectors}
\end{equation}
These connectors and the label-chain edges are all the edges, and each has
cost $1$.

Every $\operatorname{source}_{u,v}$--$\operatorname{sink}_{u,v}$ path has a
forced form: a source connector, an entire left-label chain, a middle
connector, an entire right-label chain, and a sink connector.  Indeed, the
internal vertices of a label chain have no external incidences, and every
chain has at least one edge because label costs are positive.  The source and
sink connectors force the two labels to belong to $u$ and $v$, respectively.
Since there is at most one constraint on the ordered pair $(u,v)$ by
\cref{def:weighted-online-min-rep}, the middle connector for this
pair joins labels in $\mathcal R_{uv}$.  Thus $\Lambda(F)$ satisfies $S$.
This also shows that a partially purchased chain cannot help connect a demand.

Distinct label chains are edge-disjoint.  Source and sink connectors are
private to their constraint, and ordered-pair uniqueness makes every middle
connector private as well.  A connected demand therefore accounts for three
distinct connectors, proving~\eqref{eq:realization-lower}.  Adding edges can
only complete more chains, so $F\mapsto\Lambda(F)$ is monotone.

Conversely, given a label set $L$ satisfying $S$, buy the label chain of every
label in $L$.  For each $(u,v)\in S$, choose
$(\lambda,\mu)\in\mathcal R_{uv}$ with
$(u,\lambda),(v,\mu)\in L$ and buy the corresponding three connectors.
Shared label chains are bought only once, while the connectors are private to
their constraints, so the resulting cost is exactly the claimed sum.  Ordering
the vertices as sources, left chains, right chains, and sinks shows that the
graph is acyclic.

Finally, let $F_i$ be the edges bought by an online \DSN{} algorithm after a
request prefix $S_i$.  The sets $F_i$, and hence $\Lambda(F_i)$, grow
monotonically.  The first property shows that $\Lambda(F_i)$ satisfies $S_i$
and has label cost at most $|F_i|$.  Buying each label when it first appears
in $\Lambda(F_i)$ therefore defines the claimed weighted online Min-Rep
strategy.
\end{proof}

Now we turn to developing the linear-algebraic tools
that will be needed for the proof.
First we record some basic facts about affine lines:

\begin{lemma}[Affine-line geometry]
\label{lem:pencil-geometry}
For every $a\in\F_q^d$, the affine lines through $a$ have $q$ points each,
and their nonanchor point sets partition $\F_q^d\setminus\{a\}$.  Their
number is
\[
  H:=\frac{N-1}{q-1}=1+q+\cdots+q^{d-1}
\]
and we have
\[
  \frac Nq\leq H\leq2q^{d-1},
  \qquad
  \frac{Hq}{N}<\frac{q}{q-1}.
\]
\end{lemma}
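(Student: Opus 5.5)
The plan is to derive everything from the parametrization $s\mapsto a+sv$ and a counting argument. First, a line through $a$ with direction $v\neq 0$ has exactly $q$ points, because $s\mapsto a+sv$ is injective: $a+sv=a+s'v$ forces $(s-s')v=0$, and since $v\neq0$ this gives $s=s'$. Next, every point $x\neq a$ lies on the line through $a$ with direction $x-a$. If $x$ lies on two lines through $a$, with directions $v$ and $v'$, then $x-a=sv=s'v'$ with $s,s'\neq 0$. Hence $v'$ is a nonzero scalar multiple of $v$, and the two lines coincide. So the nonanchor point sets, each of size $q-1$, partition $\F_q^d\setminus\{a\}$.

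Counting then gives $H(q-1)=N-1$, so $H=(N-1)/(q-1)=1+q+\cdots+q^{d-1}$ by the geometric series. Equivalently, lines through $a$ correspond to the $(q^d-1)/(q-1)$ directions up to scaling.

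For the inequalities:
\begin{itemize}
\item The lower bound $H\geq q^{d-1}=N/q$ follows because the last term of the sum is already $q^{d-1}$.
\item For the upper bound, $H=(q^d-1)/(q-1)<q^d/(q-1)\leq 2q^{d-1}$, which uses $q\geq 2$, so that $q/(q-1)\leq 2$.
\item Finally, $Hq/N=q(N-1)/\bigl((q-1)N\bigr)<q/(q-1)$, since $(N-1)/N<1$.
\end{itemize}

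None of these steps is a real obstacle. The only point needing care is the uniqueness argument, which relies on $\F_q$ being a field: two lines through $a$ that share a nonanchor point must have proportional directions.
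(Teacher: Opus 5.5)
Your proof is correct and follows essentially the same route as the paper. Both use injectivity of $s\mapsto a+sv$, the unique line in direction $x-a$ (giving the partition and $H=(N-1)/(q-1)$), and then the geometric-series bounds together with $(N-1)/N<1$ for the inequalities; the only difference is that you spell out the uniqueness step via proportional directions, which the paper states without proof.
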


\begin{proof}
Every line through $a$ has the form
$\{a+sv:s\in\F_q\}$ for a nonzero direction $v$.  It has $q$ points because
$s\mapsto a+sv$ is injective.  Every $x\neq a$ lies on a unique such line,
namely the line in direction $x-a$.  Thus the nonanchor point sets partition
$\F_q^d\setminus\{a\}$, and since each has $q-1$ points, their number is
$H=(N-1)/(q-1)$.

The geometric-series formula gives $H\geq q^{d-1}=N/q$ and
$H\leq q^{d-1}q/(q-1)\leq2q^{d-1}$.  Finally,
\[
  \frac{Hq}{N}
  =\frac{q}{q-1}\left(1-\frac1N\right)
  <\frac{q}{q-1}.
\]
\end{proof}

Next we show that degree $d=\log_2R$ suffices to obtain enough independence for our argument.
Let $\cP$ be the $\F_q$-vector space of $d$-variate polynomials of total
degree at most $d$.

\begin{lemma}[Independent evaluations]
\label{lem:independent-evaluations}
There are distinct anchors $a_1,\ldots,a_R\in\F_q^d$ such that, for a
uniformly random $P\in\cP$, the values
\[
  b_t:=P(a_t),\qquad t\in[R],
\]
are jointly uniform on $\F_q^R$.  Consequently, for every $t$, the value
$b_t$ is uniform in $\F_q$ conditional on $b_1,\ldots,b_{t-1}$.
\end{lemma}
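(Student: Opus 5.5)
The plan is to view evaluation at the points of $\F_q^d$ as linear functionals on the finite-dimensional $\F_q$-vector space $\cP$. For each $x\in\F_q^d$ let $\mathrm{ev}_x:\cP\to\F_q$ be $P\mapsto P(x)$; this map is linear. The first step is to show that the $\mathrm{ev}_x$ span the dual space $\cP^*$. Equivalently, no nonzero $P\in\cP$ vanishes at every point of $\F_q^d$. This holds because each variable appears with individual degree at most $d<q$, so the polynomial functions of individual degree below $q$ are faithful: a nonzero such polynomial cannot vanish on all of $\F_q^d$. The cleanest route is induction on the number of variables, using that a nonzero univariate polynomial of degree $<q$ has fewer than $q$ roots. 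Here $d=\log_2 R<R^{10}=q$, so the condition is comfortably satisfied.

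The second step is a dimension count. We have $\dim\cP=\binom{2d}{d}\geq 2^d=R$; the inequality holds since $\binom{2d}{d}$ is the largest of the $2d+1$ coefficients summing to $4^d$, or more simply because the $2^d$ multilinear monomials $\prod_{i\in S}x_i$ already have total degree at most $d$. Since the $\mathrm{ev}_x$ span a space of dimension at least $R$, we can greedily pick $a_1,\ldots,a_R$ with $\mathrm{ev}_{a_1},\ldots,\mathrm{ev}_{a_R}$ linearly independent. This is the standard fact that a spanning set contains a basis, and we take a subset of it. The chosen points are automatically distinct, because repeated points would give equal, hence dependent, functionals.

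The third step is to conclude uniformity. The map $\Phi:\cP\to\F_q^R$, $P\mapsto(P(a_1),\ldots,P(a_R))$, is linear and surjective, since its component functionals are independent. Hence every fiber is a coset of $\ker\Phi$ and has the same size $|\cP|/q^R$. It follows that a uniform $P$ pushes forward to the uniform distribution on $\F_q^R$. Under this joint uniformity the coordinates are i.i.d.\ uniform, so $b_t$ is independent of $(b_1,\ldots,b_{t-1})$ and is uniform conditionally on them.

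The only step needing real care is the first one, the faithfulness of low-degree polynomials. A Schwartz–Zippel-type induction handles it directly: write $P=\sum_j P_j(x_1,\ldots,x_{d-1})x_d^j$ with some $P_j\neq 0$. By induction, pick a point where that $P_j$ is nonzero; the resulting univariate polynomial in $x_d$ is then nonzero, has degree $\leq d<q$, and so has a non-root. Everything else is routine linear algebra.
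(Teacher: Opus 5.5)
Your proof is correct and follows the paper's proof step for step. The paper also shows the evaluation map on $\cP$ is injective by induction on the number of variables (using $d<q$), uses $\dim\cP=\binom{2d}{d}\geq 2^d=R$ to choose $R$ linearly independent evaluation functionals, and concludes that the resulting surjective linear map sends a uniform $P$ to a uniform vector in $\F_q^R$.
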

A more general version of this statement is also known from coding theory
(as the information-set theorem for generalized Reed--Muller codes~\cite[Theorem~1]{KeyMcDonoughMavron2006});
we give a simpler self-contained proof.

\begin{proof}
The polynomial space has dimension
\[
  \dim\cP=\binom{2d}{d}\geq2^d=R.
\]
We define the evaluation-on-$\F_q^d$ map on $\cP$
by $E(P) := (P(a))_{a \in \F_q^d}$.
It is well-known that $E : \cP \to \F_q^{\F_q^d}$ is injective,
i.e., that $E(Q)$ being zero implies $Q = 0$.
We sketch a proof for completeness.
Induct on the number $r$ of variables.
The case $r=0$ is trivial.
For $r \ge 1$, suppose that
$
  Q(X_1,\ldots,X_r)
  =\sum_{i=0}^{d}Q_i(X_1,\ldots,X_{r-1})X_r^i
$
vanishes on $\F_q^r$.  After fixing the first $r-1$ coordinates, the resulting
univariate polynomial (in variable $X_r$) has degree at most $d<q$ and vanishes on all of
$\F_q$, so it is zero (a nonzero univariate polynomial would have at most $d$ roots); thus every coefficient $Q_i$ vanishes on $\F_q^{r-1}$.  The induction
hypothesis gives $Q_i=0$ for every $i$, and hence $Q=0$.

Since $E$ is injective, its rank is $\dim(\cP) \ge R$.
As $E$ consists of evaluation functionals $P \mapsto P(a)$ for $a \in \F_q^d$,
we can choose $R$ linearly independent ones, indexed by points
$a_1,\ldots,a_R$.\footnote{In fact, one can show that the vertices of the $d$-dimensional hypercube form a valid choice.}  The map
\[
  P\longmapsto(P(a_1),\ldots,P(a_R))
\]
then has rank $R$ and is surjective onto $\F_q^R$.  A uniform input to a
surjective linear map has uniform output, proving joint uniformity.
\end{proof}

Finally, we give
a slightly weaker consequence of the Johnson bound for list
recovery~\cite[Lemma~5.2]{GopiEtAl2018}, specialized to univariate
polynomials evaluated on $\F_q$.  We include the short pairwise-agreement
proof for completeness.

\begin{lemma}[Johnson-type univariate list recovery]
\label{lem:johnson}
For each $x\in\F_q$, let
$A_x\subseteq\F_q$ have size at most $s$.  Call a univariate polynomial $p$
of degree at most $d$ \emph{$\rho$-prolific} if
\[
  |\{x\in\F_q:p(x)\in A_x\}|\geq\rho q.
\]
Then the set $\mathcal C$ of $\rho$-prolific polynomials
satisfies
\[
  |\mathcal C| \cdot (\rho^2-sd/q) \leq s.
\]
\end{lemma}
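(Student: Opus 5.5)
We prove the bound by double counting incidences between prolific polynomials and the pairs $(x,z)$ with $z\in A_x$. Let $\mathcal C=\{p_1,\ldots,p_m\}$ and define the incidence set
\[
  I=\{(x,z):x\in\F_q,\ z\in A_x\},\qquad |I|\leq sq.
\]
For each $(x,z)\in I$ let $n_{x,z}$ be the number of $i$ with $p_i(x)=z$. Each $p_i$ is $\rho$-prolific, so $\sum_{(x,z)\in I}n_{x,z}\geq m\rho q$.

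We then bound $\sum n_{x,z}^2$ in two ways. \emph{Lower bound:} by Cauchy--Schwarz over the at most $sq$ pairs,
\[
  \sum_{(x,z)\in I}n_{x,z}^2\geq\frac{(m\rho q)^2}{sq}=\frac{m^2\rho^2q}{s}.
\]
\emph{Upper bound:} write
\[
  \sum n_{x,z}^2=\sum n_{x,z}+\#\{(i,j,x):i\neq j,\ p_i(x)=p_j(x)\in A_x\}.
\]
The first term is at most $mq$, since each $p_i$ contributes at most one pair $(x,p_i(x))$ for each of the $q$ values of $x$. For the second term, two distinct polynomials of degree at most $d$ agree on at most $d$ points, because their difference is a nonzero polynomial of degree at most $d$. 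So the second term is at most $m(m-1)d\leq m^2d$.

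Combining the two bounds gives
\[
  \frac{m^2\rho^2q}{s}\leq mq+m^2d.
\]
Dividing by $mq/s$ (the case $m=0$ is trivial) yields $m\rho^2\leq s+msd/q$, which rearranges to $m(\rho^2-sd/q)\leq s$, exactly the claim.

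The only delicate step is the first-term bound $\sum n_{x,z}\leq mq$. We must use that total, not the prolific lower bound $m\rho q$, so that the diagonal contributes exactly the additive $s$ on the right-hand side. We also have to keep the divisions in the right order so that no requirement $\rho^2>sd/q$ is needed: when $\rho^2\le sd/q$ the statement holds vacuously, and the derived inequality remains valid regardless.
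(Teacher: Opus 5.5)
Your proposal is correct and follows the paper's proof essentially verbatim. Both arguments apply Cauchy--Schwarz to the incidence counts over the at most $qs$ listed pairs, then bound the second moment by a diagonal term $|\mathcal C|q$ plus at most $d$ agreements per ordered pair of distinct polynomials. The only difference is cosmetic: you keep the second-moment sum restricted to pairs with $z\in A_x$, whereas the paper first enlarges it to all $(x,z)\in\F_q^2$ before counting triples, and both yield the same inequality.
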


\begin{proof}
Define
\[
  m_{x,z}=|\{p\in\mathcal C:p(x)=z\}|.
\]
For a prolific $p$, the number of $x \in \F_q$ with $z=p(x) \in A_x$ is at least $\rho q$, so
$\sum_{x\in\F_q}\sum_{z\in A_x}m_{x,z} \geq \rho q |\mathcal C|$.
Applying Cauchy--Schwarz to the at most $qs$ pairs
$(x,z)$ with $z\in A_x$ gives
\begin{equation}
  |\mathcal C|^2\rho^2q^2
  \leq\left(\sum_{x\in\F_q}\sum_{z\in A_x}m_{x,z}\right)^2
  \leq qs\sum_{x\in\F_q}\sum_{z\in A_x}m_{x,z}^2
  \leq qs\sum_{x,z \in \F_q}m_{x,z}^2.
  \label{eq:johnson-cs}
\end{equation}
The final sum counts ordered triples $(p,p',x)$ with $p(x)=p'(x)$.  Equal
polynomial pairs contribute $|\mathcal C|q$.  Two distinct degree-at-most-$d$
polynomials agree at no more than $d$ points, so
\[
  \sum_{x,z \in \F_q}m_{x,z}^2
  \leq |\mathcal C|q+|\mathcal C|(|\mathcal C|-1)d
  \leq |\mathcal C|q+|\mathcal C|^2d.
\]
Substitute this into~\eqref{eq:johnson-cs} and divide by
$|\mathcal C|q^2$ to obtain
the claim.
(The claim is trivial if $|\mathcal C| = 0$.)
\end{proof}

\section{The hidden-polynomial instance}
\label{sec:hidden-polynomial}

In this section, we construct a Min-Rep instance along with a hard distribution of online constraint sequences. The Min-Rep instance (i.e., the variables, labels, possible constraints and their relations) is fixed, while the online constraint sequence is determined by a hidden random polynomial.

Recall parameters $R,d,q,N$ from the beginning of \Cref{sec:preliminaries}: $R$ is a sufficiently large power of two, $d=\log_2R$, $q=R^{10}$, and $N=q^d$.

\paragraph{Anchors.}
Fix anchors $a_1,\ldots,a_R\in \F^{d}_{q}$ satisfying
\cref{lem:independent-evaluations}.

\paragraph{Variables and Labels.} The right-variable set is $V=\F_q^d$.  Each $x\in V$ has label set
$\Lambda_x=\F_q$, with cost 
\[
c_p = 100R
\]
per label. Each right variable corresponds to a point in $\F^{d}_{q}$, and we refer to the labels of the right variables as \emph{point labels}.

The left-variable set is 
\[
  U=\{(t,b,\ell):t\in[R],\ b\in\F_q,
       \ \ell\text{ is an affine line through }a_t\}.
\]
We now explain the intuition behind the tuple $(t,b,\ell)$, which will become clearer when we define the online constraint sequence. Here, $t$ denotes the \emph{round} number, $b$ corresponds to the possible values of the hidden polynomial at the anchors, and $\ell$ is an affine line through the anchor $a_{t}$ of round $t$.

We introduce the notion of \emph{line polynomials} before defining the labels of left variables.

\begin{definition}[Line polynomials]
Consider an affine line $\ell$ containing the anchor point $a_t$. 
We fix an affine parametrization $\phi_{t,\ell}: \F_q \to \ell$ defined by $\phi_{t,\ell}(s) = a_t + s \cdot v$, where $v \in \F_q^d \setminus \{0\}$ denotes an arbitrary direction vector of $\ell$. Note that this parametrization is bijective.

For any univariate polynomial $\tilde{g}: \F_q \to \F_q$, we refer to the composed function $g = \tilde{g} \circ \phi_{t,\ell}^{-1}: \ell \to \F_q$ as a \emph{line polynomial} on $\ell$. We call $\tilde{g}$ the \emph{underlying polynomial} of $g$.
The line polynomial $g$ has degree at most $d$ if its underlying polynomial $\tilde{g}$ has degree at most $d$.
\end{definition}

For each left-variable $(t,b,\ell)\in U$, its label set $\Lambda_{t,b,\ell}$ consists of the
degree-at-most-$d$ line polynomials $g$ on $\ell$ satisfying $g(a_t)=b$, and each label has cost
\[c_r = 100q.\]
Similarly, we refer to the labels of the left variables as \emph{line labels}.

\paragraph{Constraints and Relations.}

For every left-variable $(t,b,\ell)\in U$ and $x\in\ell$, we add a constraint between it and the right-variable $x\in \ell\subseteq V$ on this line $\ell$. The relation of this constraint is
\[
  \mathcal R_{(t,b,\ell),x}
  :=\{(g,z):g\in\Lambda_{t,b,\ell},\ z\in\Lambda_x,\ g(x)=z\}.
\]
In other words, for each line polynomial $g$ in $\Lambda_{t,b,\ell}$, we add one pair $(g,z = g(x))$ to the relation.

\paragraph{The Online Constraint Sequence.}
For $t\in[R]$ and $b\in\F_q$, the \emph{block} $(t,b)$ consists of all constraints whose left variables are associated with $(t,b)$. 

Choose the hidden polynomial $P:\F^{d}_{q}\to \F_{q}$ uniformly from $\cP$
(recall from \Cref{lem:independent-evaluations} that $\cP$ collects all $d$-variate polynomials of total degree at most $d$). For each round $t$, let $b_t=P(a_t)$ be the evaluated value of $P$ at the anchor $a_{t}$. 

The online constraint sequence $\sigma(P)$ reveals constraints round by round from $t=1$ to $R$. In round $t$, all constraints in the block $(t, b_{t})$ are revealed. Namely,
\[
  \sigma(P):=((1,b_1),(2,b_2),\ldots,(R,b_R)).
\]
Constraints within each block appear in an arbitrary order fixed in advance.

\paragraph{Properties of the Instance.} The following \Cref{lem:hidden-instance-properties} bounds the number of labels and constraints in the Min-Rep instance. The following \Cref{lem:offline} upper bounds the offline optimum of any online constraint sequence in the distribution.

\begin{lemma}
\label{lem:hidden-instance-properties}
The above gives a valid weighted online Min-Rep
instance with
$Nq$ right labels,
$RqHq^d$ left labels,
and $RHq^2$ possible constraints.
Every block has $Hq$ constraints, and every sequence $\sigma(P)$ has $RHq$
constraints.  
\end{lemma}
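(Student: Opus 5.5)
The plan is to verify the conditions of \cref{def:weighted-online-min-rep} directly and then count. There are three validity checks.

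First, the left and right variable sets are disjoint: they are formally different kinds of objects (triples versus points).

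Second, every label cost is a positive integer. We have $c_p=100R$ and $c_r=100q$.

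Third, each relation $\mathcal R_{(t,b,\ell),x}$ is nonempty, and there is at most one possible constraint per ordered pair. For nonemptiness, the constant line polynomial $g\equiv b$ has degree $0\le d$ and satisfies $g(a_t)=b$, so it lies in $\Lambda_{t,b,\ell}$. The pair $(g,b)$ is then in the relation, since $b\in\Lambda_x=\F_q$. Uniqueness holds by construction: we add exactly one constraint for each pair $((t,b,\ell),x)$ with $x\in\ell$.

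For the counts I use \cref{lem:pencil-geometry}, which says there are $H$ lines through each anchor, each with $q$ points.

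Right labels: there are $N$ points, each with $q$ labels, giving $Nq$.

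Left variables: there are $R\cdot q\cdot H$ of them. For each, I count the degree-at-most-$d$ univariate polynomials $\tilde g$ with $\tilde g(0)=b$. Here I use that $\phi_{t,\ell}(0)=a_t$ and that $\tilde g\mapsto \tilde g\circ\phi_{t,\ell}^{-1}$ is a bijection, because $\phi_{t,\ell}$ is bijective. The constant coefficient is fixed to $b$, and the other $d$ coefficients are free. Since $d<q$, distinct coefficient vectors give distinct functions on $\F_q$, so there are $q^d$ labels per left variable. This gives $RqHq^d$ left labels.

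Possible constraints: each left variable $(t,b,\ell)$ has one constraint per point $x\in\ell$, so $q$ in total. Summing over the $RqH$ left variables gives $RHq^2$.

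Blocks and sequences: a block $(t,b)$ fixes $t$ and $b$, leaving $H$ lines with $q$ constraints each, so $Hq$ constraints. A sequence $\sigma(P)$ consists of $R$ blocks, one per round, with distinct $t$, so it has $RHq$ constraints in total.

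I expect no real obstacle. The only subtle point is the left-label count, which needs two facts: that the underlying polynomials of degree at most $d<q$ correspond injectively to functions on $\ell$, and that the parametrization sends $0$ to the anchor, so the condition $g(a_t)=b$ fixes exactly the constant coefficient.
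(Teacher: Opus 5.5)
Your proposal is correct and follows essentially the same route as the paper's proof. Both use the constant polynomial $g\equiv b$ for nonemptiness, uniqueness of each ordered pair by construction, $q^d$ labels per left variable (constant coefficient fixed, $d$ free coefficients, injectivity from $d<q$), and direct counting of constraints, blocks, and sequences. Your explicit checks that $U$ and $V$ are disjoint and that the costs are positive integers are harmless additions that the paper leaves implicit.
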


\begin{proof}
There are $N$ right variables with $q$ labels each.  There are $RqH$ left
variables.  A degree-at-most-$d$ univariate polynomial $\tilde{g}$ with value
$\tilde{g}(0)=b$ at zero has $d$ free coefficients, and hence $q^d$ choices.  Because
$d<q$, distinct such polynomials induce distinct functions on $\F_q$. Moreover, such polynomials as underlying polynomials lead to $q^{d}$ distinct line polynomials $g$ with degree at most $d$ and $g(a_{t}) = b$, so each
left variable has exactly $q^d$ labels.

Every displayed relation is nonempty: for example, the constant polynomial
$g=b$ is a left label and is compatible at $x$ with $z=b$.  Each ordered
variable pair occurs at most once because $(t,b,\ell)$ and $x$ uniquely
determine the relation.  Each left variable is constrained to the $q$ points
of its line.  Hence there are $RqHq=RHq^2$ possible constraints, while fixing
$(t,b)$ leaves $Hq$ constraints in one block and $RHq$ in a generated
sequence.  
\end{proof}

\begin{lemma}[Offline optimum]
\label{lem:offline}
For every $P\in\cP$, the sequence $\sigma(P)$ satisfies
\[
  \OPT_{\mathrm{MinRep}}(\sigma(P))<250RN< M:=300RN.
\]
\end{lemma}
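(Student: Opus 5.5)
We exhibit an explicit feasible offline solution for $\sigma(P)$ and bound its cost directly. Since $\OPT_{\mathrm{MinRep}}$ is a minimum over feasible solutions, one cheap solution suffices. The solution is the one certified by the hidden polynomial. For every point $x\in\F_q^d$, buy the single point label $P(x)\in\Lambda_x$. For every round $t\in[R]$ and every affine line $\ell$ through $a_t$, buy the line label $P|_\ell$ for the left variable $(t,b_t,\ell)$. Here $P|_\ell:=\tilde g\circ\phi_{t,\ell}^{-1}$ with $\tilde g(s)=P(a_t+sv)$.

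\textbf{Feasibility.} We check three things.

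First, $P|_\ell$ is a legal label of $(t,b_t,\ell)$. Substituting the affine map $s\mapsto a_t+sv$ into a polynomial of total degree at most $d$ yields a univariate polynomial $\tilde g$ of degree at most $d$. Moreover, $\tilde g(0)=P(a_t)=b_t$, so $P|_\ell(a_t)=b_t$.

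Second, every constraint in block $(t,b_t)$ is covered. Such a constraint joins $(t,b_t,\ell)$ to some $x\in\ell$. We have $P|_\ell(x)=\tilde g(\phi_{t,\ell}^{-1}(x))=P(x)$, so the bought pair $(P|_\ell,P(x))$ lies in $\mathcal R_{(t,b_t,\ell),x}$.

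Third, only blocks $(t,b_t)$ occur in $\sigma(P)$. Hence this label set satisfies every constraint of the sequence, whatever the order within blocks. Since an offline solution may buy everything upfront, it is feasible.

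\textbf{Cost.} The point labels cost $N\cdot c_p=100RN$. The line labels number $RH$, one per round and line through $a_t$, by \cref{lem:pencil-geometry}, and each costs $c_r=100q$. So they cost $100qRH$. By \cref{lem:pencil-geometry}, $Hq<\frac{q}{q-1}N$, hence $100qRH<100RN\cdot\frac{q}{q-1}\le 150RN$, since $q\ge 3$ (indeed $q=R^{10}$ is huge). The total is therefore $<100RN+150RN=250RN<300RN=M$.

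No step is a real obstacle. The only point needing care is that the restriction $P|_\ell$ is defined through the fixed parametrization $\phi_{t,\ell}$ sending $0$ to $a_t$. This guarantees that $P|_\ell$ agrees with $P$ pointwise on $\ell$ and meets the anchor condition $g(a_t)=b_t$. Once that is in place, the rest is the counting above.
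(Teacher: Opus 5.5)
Your proposal is correct and matches the paper's proof: you use the same offline solution (point labels $P(x)$ and line labels $P|_\ell$) and bound its cost $100RN+100qRH$ the same way, via \cref{lem:pencil-geometry}. The only difference is cosmetic: you bound $q/(q-1)\le 3/2$ using $q\ge 3$, where the paper uses $4/3$ with $q\ge 4$, and both give the strict bound below $250RN$.
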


\begin{proof}
The offline solution is as follows. 
\begin{itemize}
\item For each right variable $x\in \F^{d}_{q}$, buy label $P(x)\in \Lambda_x$.
\item For each round $t$ and each line $\ell$ through $a_{t}$ (corresponding to the left variable $(t,b_{t},\ell)$), buy label $P|_\ell \in \Lambda_{t,b_{t},\ell}$. Indeed, $P|_\ell$ is a degree-at-most-$d$ line polynomial on $\ell$ (furthermore $P(a_{t})=b_{t}$ so it is in $\Lambda_{t,b_{t},\ell}$), since $P$ has total degree at most $d$, and the parametrization $\phi_{t,\ell}$ is affine-linear.
\end{itemize}
We can easily verify that this offline solution satisfies all the constraints in $\sigma(P)$. The label cost is
\[
  100RN+100qRH,
\]
since we buy $N$ point labels and $RH$ line labels (recall that $R$ is the number of rounds and $H$ is the number of affine lines through $a_{t}$).
By \cref{lem:pencil-geometry}, we have $Hq/N<q/(q-1)\leq4/3$ for $q\geq4$.
Hence this cost is less than
\[
  100RN+100\cdot\frac43RN<250RN<M.
\]
\end{proof}

\section{The lower bound}
\label{sec:distributional-lower-bound}

We consider deterministic algorithms for now.
Since we give a hard distribution of constraint sequences
over a fixed Min-Rep instance,
it will be easy to generalize to randomized algorithms (against an oblivious adversary)
later in the proof of \cref{thm:main}.

Fix a deterministic weighted online Min-Rep strategy. Our goal is to show that its expected cost over the hard distribution of constraint sequences is significantly larger than the offline cost upper bound $M$ (i.e., \Cref{prop:distributional}). Recall that each constraint sequence in the hard distribution is revealed round by round. For simplicity, we assume all constraints within a round are revealed simultaneously at the start of that round. Note that this only makes the online task easier. Furthermore, without loss of generality, assume that labels of a left variable $(t,b_t,\ell)$ are only bought during round $t$.

We first introduce some notation. Consider a round $t$.
\begin{itemize}
\item Let $C_{t}$ denote the online cost up to the end of round $t$. In particular, at the very beginning, the online cost is $C_{0} = 0$. Write $\Delta C_{t} = C_{t}-C_{t-1}$.
\item For each point $x\in \F_{q}^{d}$ (i.e., each right variable), let $L^{<t}_{x}\subseteq \Lambda_{x}$ be the point labels bought before round $t$.
\item For each line $\ell$ through $a_{t}$ (i.e., each left variable $(t, b_{t},\ell)$), let $G_{t,\ell}\subseteq \Lambda_{t,b_{t},\ell}$ be the line labels bought in round $t$.
\end{itemize}

The argument now has four steps, as explained in \cref{sec:online-lb-tech-overview}.  First, old point labels determine a short
list of possible anchor values on each line.  Second, below the target cost,
most lines contain many points with short old lists.  Third, whenever the new
anchor value misses a line's set of prolific polynomials, that line forces a large cost.
Finally, we sum these conditional round costs until the target is reached.

\begin{definition}[Heavy points and prolific polynomials]
\label{def:prolific}
Consider a round $t$. A point $x$ is \emph{$t$-heavy} if $|L_x^{<t}|>16R$.  Define its truncated
old-label list by
\[
  \widehat L_{x,t}=
  \begin{cases}
    L_x^{<t},&|L_x^{<t}|\leq16R,\\
    \varnothing,&|L_x^{<t}|>16R.
  \end{cases}
\]
Define 
\[
\rho=1/(256R).
\]
For a line $\ell$ through $a_t$, a degree-at-most-$d$
line polynomial $g$ on $\ell$ is \emph{$(t,\ell)$-prolific} if
\[
  |\{x\in\ell:g(x)\in\widehat L_{x,t}\}|\geq\rho q.
\]
Let $\mathcal C_{t,\ell}$ be the set of these prolific polynomials, and let
$D_{t,\ell}$ be the event that 
\[
b_t\in\{g(a_t):g\in\mathcal C_{t,\ell}\}
\]
holds. Namely, $D_{t,\ell}$ occurs if some line polynomial from $\mathcal{C}_{t,\ell}$ has value $b_{t}$ at the anchor $a_{t}$.
\end{definition}

We note that all objects in the above \Cref{def:prolific}, except for whether $D_{t,\ell}$ occurs, are
determined before the beginning of round $t$. For better understanding, we point out that such a line polynomial $g$ being $(t,\ell)$-prolific means, by definition, that its underlying polynomial $\tilde{g}=g\circ \phi_{t,\ell}$ is prolific (w.r.t. $A_{\tilde{x}}=\widehat{L}_{x,t}$ for each $\tilde{x}\in \F_{q}$ and $x = \phi_{t,\ell}(\tilde{x})$) under the definition in \Cref{lem:johnson}. 

The following \Cref{lem:prolific-list} bounds the number of prolific line polynomials, and further bounds the probability that $D_{t,\ell}$ occurs.

\begin{lemma}[The prolific list rarely predicts the anchor]
\label{lem:prolific-list}
For each round $t$ and every line $\ell$ through $a_t$,
\[
  |\mathcal C_{t,\ell}|
  \leq 2^{21}R^3
\]
and
\begin{equation}
  \Prb[D_{t,\ell}\mid b_1,\ldots,b_{t-1}]\leq\frac14.
  \label{eq:prolific-hit-probability}
\end{equation}
\end{lemma}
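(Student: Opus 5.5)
We apply \cref{lem:johnson} on each line through the anchor, then use the fact that the candidate set is determined before $b_t$ is revealed. Fix $t$ and a line $\ell$ through $a_t$, and use the parametrization $\phi_{t,\ell}$. For each $\tilde x\in\F_q$ set $A_{\tilde x}=\widehat L_{\phi_{t,\ell}(\tilde x),t}$. By the truncation in \cref{def:prolific}, $|A_{\tilde x}|\le s:=16R$. As noted after \cref{def:prolific}, $g$ is $(t,\ell)$-prolific exactly when its underlying polynomial $\tilde g=g\circ\phi_{t,\ell}$ is $\rho$-prolific with respect to these lists. The map $g\mapsto\tilde g$ is injective, so $|\mathcal C_{t,\ell}|$ is at most the size of the prolific set in \cref{lem:johnson}.

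Next we check the denominator $\rho^2-sd/q$. We have $\rho^2=1/(2^{16}R^2)$, while $sd/q=16R\log_2R/R^{10}$, which is negligible for large $R$. In particular $sd/q\le \rho^2/2$, so $\rho^2-sd/q\ge \rho^2/2=1/(2^{17}R^2)$. \cref{lem:johnson} then gives
\[
|\mathcal C_{t,\ell}|\le \frac{s}{\rho^2/2}=16R\cdot 2^{17}R^2=2^{21}R^3.
\]
This is the first claim. The only real check here is that the truncation makes every list short, including at heavy points. That is exactly why heavy points get the empty list.

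For the probability bound, note that $L^{<t}_x$ consists of labels bought before round $t$. The strategy is deterministic and the sequence up to round $t-1$ is determined by $b_1,\dots,b_{t-1}$. Hence the lists $\widehat L_{x,t}$, the set $\mathcal C_{t,\ell}$, and the value set $B_{t,\ell}:=\{g(a_t):g\in\mathcal C_{t,\ell}\}$ are all functions of $b_1,\dots,b_{t-1}$. Also $|B_{t,\ell}|\le|\mathcal C_{t,\ell}|\le 2^{21}R^3$.

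By \cref{lem:independent-evaluations}, $b_t$ is uniform on $\F_q$ conditional on $b_1,\dots,b_{t-1}$. Therefore
\[
\Prb[D_{t,\ell}\mid b_1,\ldots,b_{t-1}]=\frac{|B_{t,\ell}|}{q}\le\frac{2^{21}R^3}{R^{10}}\le\frac14
\]
for $R$ sufficiently large. This proves \eqref{eq:prolific-hit-probability}.

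The main subtlety is measurability: $\mathcal C_{t,\ell}$ must depend only on information available before round $t$. This holds because we assumed that left labels for round $t$ are bought only in round $t$, and only point labels from earlier rounds enter $\widehat L_{x,t}$. Everything else is arithmetic with $q=R^{10}$.
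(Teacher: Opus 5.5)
Your proof is correct and follows the paper's argument: you apply \cref{lem:johnson} with $s=16R$, use $sd/q\le\rho^2/2$ to get $|\mathcal C_{t,\ell}|\le 2^{21}R^3$, and then use the conditional uniformity of $b_t$ (from \cref{lem:independent-evaluations}) to bound the hit probability by $2^{21}R^3/q\le 1/4$. One small correction to your closing remark: $\mathcal C_{t,\ell}$ is a function of $b_1,\ldots,b_{t-1}$ because a deterministic online strategy's point-label purchases before round $t$ depend only on the earlier blocks, as you said earlier; the assumption about when left labels are bought plays no role here.
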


\begin{proof}
For sufficiently large $R$, the parameter choices give
$16Rd/q\leq\rho^2/2$. 

Apply \cref{lem:johnson} with respect to $A_{\tilde{x}}=\widehat{L}_{x,t}$ for each $\tilde{x}\in \F_{q}$ and $x = \phi_{t,\ell}(\tilde{x})$ (so $s=16R$). It gives that the number of prolific degree-at-most-$d$ underlying polynomials is at most $s/(\rho^{2}-sd/q)$. Hence, the size of $\mathcal{C}_{t,\ell}$, i.e., the number of $(t,\ell)$-prolific degree-at-most-$d$ line polynomials is at most
\[
  |\mathcal C_{t,\ell}|
  \leq\frac{16R}{\rho^2-16Rd/q}
  \leq\frac{32R}{\rho^2}
  =2^{21}R^3.
\]
Conditional on $b_1,\ldots,b_{t-1}$, the set
$\{g(a_t):g\in\mathcal C_{t,\ell}\}$ is fixed, while $b_t$ is uniform in
$\F_q$ by \cref{lem:independent-evaluations} and the way we sample the hidden polynomial $P$. Hence
\[
  \Prb[D_{t,\ell}\mid b_1,\ldots,b_{t-1}]
  \leq\frac{2^{21}R^3}{q}\leq\frac14.
\]
\end{proof}

Recall that for a line $\ell$ through the anchor $a_{t}$, $\ell$'s nonanchor point set is $\ell\setminus \{a_{t}\}$. \Cref{def:crowded} defines crowded lines, which are lines with many heavy nonanchor points. \Cref{lem:many-noncrowded} shows that, if the cost spent before round $t$ is low, then in round $t$, a large fraction of lines through $a_{t}$ is not crowded.

\begin{definition}[Crowded lines]
\label{def:crowded}
In round $t$, a line $\ell$ through $a_t$ is \emph{$t$-crowded} if
$\ell\setminus\{a_t\}$ contains at least $q/2$ $t$-heavy points.
\end{definition}

\begin{lemma}[Many lines are not crowded]
\label{lem:many-noncrowded}
Consider a round $t$. If $C_{t-1}<RM$, then at least $5H/8$ lines through $a_t$ are
not $t$-crowded, and for each such line $\ell$, at least $q/2$ of its nonanchor
points are not $t$-heavy.
\end{lemma}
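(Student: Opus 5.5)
This is a counting (Markov-type) argument on the number of heavy points, followed by a double count over the lines through $a_t$. The starting point is the cost bound. Since $C_{t-1}<RM=300R^2N$ and each point label costs $c_p=100R$, the total number of point labels bought before round $t$ satisfies
\[
  \sum_{x\in\F_q^d}|L_x^{<t}|\leq C_{t-1}/(100R)<3RN .
\]
Line labels only add to the cost, so ignoring them is safe. By definition, every $t$-heavy point carries more than $16R$ old labels. Therefore the number of $t$-heavy points is less than $3RN/(16R)=3N/16$.

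Next I use \cref{lem:pencil-geometry}: the nonanchor sets $\ell\setminus\{a_t\}$ of the $H$ lines through $a_t$ partition $\F_q^d\setminus\{a_t\}$. So each heavy point other than $a_t$ is counted in exactly one line, and the anchor itself is never counted. Each $t$-crowded line contains at least $q/2$ heavy nonanchor points. Hence the number of crowded lines is less than
\[
  \frac{3N/16}{q/2}=\frac{3N}{8q}\leq\frac{3H}{8},
\]
where the last step uses $H\geq N/q$ from \cref{lem:pencil-geometry}. This leaves more than $5H/8$ lines that are not $t$-crowded.

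For the second claim, take a non-crowded line $\ell$. It has fewer than $q/2$ heavy points among its $q-1$ nonanchor points, so it has more than $q/2-1$ nonheavy nonanchor points. This falls just short of the stated $q/2$. To close the gap, I use that $q$ is even and the heavy count is an integer: the number of heavy points is at most $q/2-1$, so at least $(q-1)-(q/2-1)=q/2$ nonanchor points are nonheavy, which is exactly the claim.

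Neither step is a real obstacle. The only points needing care are that the pencil partition prevents double counting and that this integrality observation delivers the sharp $q/2$.
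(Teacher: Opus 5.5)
Your proposal is correct and follows the paper's proof essentially step for step. You bound the number of heavy points by $3N/16$ via the cost, use the disjoint nonanchor sets of the pencil to get fewer than $3N/(8q)\leq 3H/8$ crowded lines, and count nonheavy nonanchor points on a noncrowded line. Your explicit integrality remark ($q$ even, so at most $q/2-1$ heavy points) just spells out what the paper uses implicitly.
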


\begin{proof}
Let $h_t$ be the number of $t$-heavy points in $\F^{d}_{q}$. Since every point label costs
$c_p$ and every heavy point has more than $16R$ old labels bought in previous rounds, we have
\[
  16Rh_t
  \leq\sum_{x\in\F_q^d}|L_x^{<t}|
  \leq\frac{C_{t-1}}{c_p}
  <\frac{RM}{c_p}=3RN.
\]
Hence $h_t<3N/16$.  Distinct lines through $a_t$ have disjoint nonanchor
points, so
fewer than
\[
  \frac{3N/16}{q/2}=\frac{3N}{8q}\leq\frac{3H}{8}
\]
lines are $t$-crowded, where the inequality uses
\cref{lem:pencil-geometry}.  

The second statement is essentially by definition. Every non-crowded line has $q-1$ nonanchor points and at most
$q/2-1$ of them are $t$-heavy. It therefore contains at least $q/2$ nonheavy
nonanchor points.
\end{proof}

\begin{lemma}[Per-line label-cost tradeoff]
\label{lem:line-tradeoff}
Consider a round $t$ with $C_{t-1}<RM$, and a line $\ell$ through $a_t$ that is not
$t$-crowded.  If $D_{t,\ell}$ does not occur, then in this round, a total cost of at least 
\[
50Rq
\]
must be spent on buying new line labels of the left variable $(t,b_{t},\ell)$ and new point labels of $\ell$'s nonanchor points.
\end{lemma}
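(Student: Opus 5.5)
The proof is a direct counting argument on the fixed line $\ell$. Let $u=|G_{t,\ell}|$ be the number of line labels of $(t,b_t,\ell)$ bought in round $t$; by our standing assumption these are all new. Let $Y$ be the set of nonanchor points of $\ell$ that are not $t$-heavy. Since $\ell$ is not $t$-crowded and $C_{t-1}<RM$, \cref{lem:many-noncrowded} gives $|Y|\geq q/2$. If $u\geq R/2$, the line labels alone already cost at least $100q\cdot R/2=50Rq$, so I will assume $u<R/2$.

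At the end of round $t$, every constraint between $(t,b_t,\ell)$ and a point $x\in Y$ has been revealed and must be satisfied. So for each $x\in Y$ there is some $g\in G_{t,\ell}$ whose value $g(x)$ is a label purchased at $x$ by the end of round $t$. I split $Y$ into two parts.

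\begin{itemize}
  \item $Y_{\mathrm{old}}$: points with $g(x)\in L_x^{<t}$ for some $g\in G_{t,\ell}$. Because $x$ is not heavy, $L_x^{<t}=\widehat L_{x,t}$, so $g(x)\in\widehat L_{x,t}$.
  \item $Y_{\mathrm{new}}$: the remaining points. Each of these needs at least one point label bought during round $t$.
\end{itemize}

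Now I use that $D_{t,\ell}$ fails. Every $g\in G_{t,\ell}$ lies in $\Lambda_{t,b_t,\ell}$, so $g(a_t)=b_t$. Hence no $g\in G_{t,\ell}$ belongs to $\mathcal C_{t,\ell}$, which means each such $g$ is not $(t,\ell)$-prolific. By \cref{def:prolific}, each $g$ satisfies $g(x)\in\widehat L_{x,t}$ for fewer than $\rho q$ points $x\in\ell$.

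A union bound over the $u$ polynomials then gives $|Y_{\mathrm{old}}|<u\rho q$. Therefore
\[
|Y_{\mathrm{new}}|>q/2-u\rho q .
\]
The new point labels at distinct points are distinct labels, so the cost spent in round $t$ on $\ell$'s nonanchor points is more than $100R\,q(1/2-\rho u)$.

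Adding the line-label cost, the total round-$t$ cost on this line is at least
\[
100qu+100Rq\Bigl(\tfrac12-\tfrac{u}{256R}\Bigr)=100q\Bigl(\tfrac{255}{256}u+\tfrac R2\Bigr)\geq 50Rq .
\]
This bound holds for every $u\geq0$, so the case split on $u$ is not actually needed; it is kept only for clarity.

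The one point that needs care is the role of the anchor $a_t$. The truncated lists $\widehat L_{x,t}$, and with them the prolific count, range over all of $\ell$, including $a_t$. That is harmless, because we only need an upper bound on the number of points in $Y\subseteq\ell\setminus\{a_t\}$ that each $g$ covers through old labels. Restricting attention to nonanchor points also keeps the charged costs disjoint across lines, as \cref{lem:round-increment} will later require.
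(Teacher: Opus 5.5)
Your proposal is correct and follows the paper's proof essentially step for step. Both take the at least $q/2$ non-heavy nonanchor points, note that each of the $u$ non-prolific line labels covers fewer than $\rho q$ of them through old labels, and add $100qu$ to $100R\,q(1/2-\rho u)$, which gives at least $50Rq$; as you observe, the case split on $u$ is not needed.
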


\begin{proof}
Let $u = |G_{t,\ell}|$ be the number of line labels of $(t,b_t,\ell)$ bought during round $t$ (recall that, by assumption, no line labels for $(t,b_t,\ell)$ were bought in previous rounds).  

When $D_{t,\ell}$ does not occur, every line polynomial
$g\in G_{t,\ell}$ is not $(t,\ell)$-prolific: every $g\in G_{t,\ell}$ has $g(a_t)=b_t$, whereas no $(t,\ell)$-prolific line polynomial has value $b_{t}$ at anchor $a_{t}$.

Since $\ell$ is not $t$-crowded, it contains at least $q/2$ nonheavy
nonanchor points. At each such point $x$, the revealed constraint between $(t,b_{t},\ell)$ and $x$ requires that the point label $g(x)$ of $x$ be bought for some $g\in G_{t,\ell}$.
If this point label was an old label bought before round $t$, then
$g(x)\in\widehat L_{x,t}$.  Each of the $u$ nonprolific polynomials can
support old labels at fewer than $\rho q$ points.  Hence at least
$
  q/2-\rho q u
$
distinct nonanchor point labels must be bought during round $t$. The $u$ line labels cost $100qu$, while each new point label costs $100R$.
Their total cost is therefore at least
\[
  100qu+100R(q/2-\rho q u)
  =100q\left(\frac R2+(1-R\rho)u\right)
  =100q\left(\frac R2+\frac{255}{256}u\right)
  \geq50Rq.
\]
\end{proof}

\begin{lemma}[Expected cost of an unfinished round]
\label{lem:round-increment}
For each round $t$,
for every realization $(b_1, \ldots, b_{t-1})$ with $C_{t-1}<RM$,
\begin{equation}
  \E[\Delta C_t\mid b_1,\ldots,b_{t-1}]\geq\frac{5M}{64}.
  \label{eq:round-increment}
\end{equation}
\end{lemma}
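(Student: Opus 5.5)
Since the strategy is deterministic, everything that happened before round $t$ is a function of $(b_1,\ldots,b_{t-1})$: the old label lists $L_x^{<t}$, the cost $C_{t-1}$, and hence the heaviness of points, crowdedness of lines, and the prolific sets $\mathcal C_{t,\ell}$. So we fix a realization with $C_{t-1}<RM$ and work conditionally on it. Let $\mathcal N_t$ be the set of lines through $a_t$ that are not $t$-crowded. This set is deterministic given the conditioning, and by \cref{lem:many-noncrowded} it satisfies $|\mathcal N_t|\geq 5H/8$.

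The plan is to lower-bound $\Delta C_t$ by a sum of disjoint per-line contributions. For each line $\ell$ through $a_t$, let $X_\ell$ be the cost spent in round $t$ on new line labels of $(t,b_t,\ell)$ plus new point labels at the nonanchor points of $\ell$. Different lines have different left variables, and by \cref{lem:pencil-geometry} their nonanchor point sets are disjoint. So no label is counted twice, and
\[
  \Delta C_t\geq\sum_{\ell\in\mathcal N_t}X_\ell .
\]
(The anchor's point labels and any other purchases are simply dropped.)

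By \cref{lem:line-tradeoff}, for each $\ell\in\mathcal N_t$ we have $X_\ell\geq 50Rq$ on the complement of $D_{t,\ell}$, and $X_\ell\geq0$ always. Hence $X_\ell\geq 50Rq\cdot\mathbf 1[\neg D_{t,\ell}]$. Taking conditional expectations and using \eqref{eq:prolific-hit-probability} from \cref{lem:prolific-list}, we get $\E[X_\ell\mid b_1,\ldots,b_{t-1}]\geq \frac34\cdot 50Rq$. Summing over $\ell\in\mathcal N_t$ and applying linearity of expectation (no independence between lines is needed) gives
\[
  \E[\Delta C_t\mid b_1,\ldots,b_{t-1}]\geq \frac58 H\cdot\frac34\cdot 50Rq=\frac{375}{16}RHq .
\]

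Finally we compare with $M=300RN$. By \cref{lem:pencil-geometry}, $Hq\geq N$, so the bound is at least $\frac{375}{16}RN=\frac{375}{16\cdot300}M=\frac{5}{64}M$, which is exactly \eqref{eq:round-increment}.

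The only delicate point is justifying that \cref{lem:line-tradeoff} applies pointwise on the event $\neg D_{t,\ell}$. This holds because the crowdedness of each line and the condition $C_{t-1}<RM$ are fixed by the conditioning, so the set $\mathcal N_t$ does not depend on $b_t$. The remaining randomness is only in $b_t$, and it enters solely through the events $D_{t,\ell}$.
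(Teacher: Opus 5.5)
Your proof is correct and follows the paper's argument essentially step for step: you split $\Delta C_t$ into disjoint per-line contributions over the non-crowded lines, apply \cref{lem:line-tradeoff} on the complement of $D_{t,\ell}$, and combine the $\frac14$ bound from \cref{lem:prolific-list} with the $5H/8$ count from \cref{lem:many-noncrowded} via linearity of expectation, finishing with $Hq\geq N$. Your explicit remark that the set of non-crowded lines and the prolific sets are fixed by the conditioning, so that only $b_t$ remains random, makes precise a point the paper leaves implicit.
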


\begin{proof}
For distinct lines through $a_t$, the left variables are distinct and the
nonanchor point sets are disjoint.  \Cref{lem:line-tradeoff} therefore
gives
\[
  \Delta C_t
  \geq
  50Rq
  \sum_{\substack{\text{line $\ell$ through $a_{t}$}\\
                  \ell\text{ not }t\text{-crowded}}}
  (1-\mathbf{1}_{D_{t,\ell}}).
\]
By \cref{lem:prolific-list,lem:many-noncrowded}, and by
linearity of conditional expectation,
\[
  \begin{aligned}
  \E[\Delta C_t\mid b_1,\ldots,b_{t-1}]
  &\geq\frac{5H}{8}\cdot\frac34\cdot50Rq
   =\frac5{64}(300RHq)\\
  &\geq\frac5{64}(300RN)
   =\frac{5M}{64},
  \end{aligned}
\]
where $Hq\geq N$ by \cref{lem:pencil-geometry}.
\end{proof}

\begin{proposition}[Deterministic distributional lower bound]
\label{prop:distributional}
Every deterministic weighted online Min-Rep strategy
satisfies
\begin{equation}
  \E_P[C_R]\geq\frac{5R}{128}M.
  \label{eq:distributional}
\end{equation}
\end{proposition}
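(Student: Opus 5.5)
The plan is to sum the per-round increments of \cref{lem:round-increment}, with a case split on whether the final cost ever reaches the threshold $RM$. Since $C_t$ is nondecreasing in $t$ and $C_R=\sum_{t=1}^R\Delta C_t$, the events $\{C_{t-1}<RM\}$ shrink as $t$ grows, and all of them contain $\{C_R<RM\}$.

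\emph{Case 1:} $\Prb[C_R\geq RM]\geq 1/2$. Since $C_R\geq 0$, we get $\E[C_R]\geq RM\cdot\Prb[C_R\geq RM]\geq RM/2$, which exceeds $\frac{5R}{128}M$.

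\emph{Case 2:} $\Prb[C_R<RM]>1/2$. Then for every $t\in[R]$ we have $\Prb[C_{t-1}<RM]\geq\Prb[C_R<RM]>1/2$. Fix $t$. The quantity $C_{t-1}$ is determined by $b_1,\ldots,b_{t-1}$: the strategy is deterministic and the revealed blocks before round $t$ depend only on these values. Hence $\{C_{t-1}<RM\}$ is measurable with respect to $b_1,\ldots,b_{t-1}$, and by the tower rule
\[
  \E[\Delta C_t]\geq\E\bigl[\mathbf 1_{C_{t-1}<RM}\,\E[\Delta C_t\mid b_1,\ldots,b_{t-1}]\bigr]\geq\frac{5M}{64}\,\Prb[C_{t-1}<RM]>\frac{5M}{128}.
\]
The first inequality uses $\Delta C_t\geq 0$, and the second uses \cref{lem:round-increment} on the event. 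Summing over $t=1,\ldots,R$ gives $\E[C_R]\geq\frac{5R}{128}M$.

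The main point needing care is the measurability claim. The conditioning in \cref{lem:round-increment} is on $b_1,\ldots,b_{t-1}$, while the randomness is over $P$. The round-$t$ cost $\Delta C_t$ depends on $b_t$ and the past only, since it is determined by $b_1,\ldots,b_t$. Everything before round $t$, including $C_{t-1}$ and the old label lists, is a function of $b_1,\ldots,b_{t-1}$. So the conditional statement of \cref{lem:round-increment} applies pointwise on the event $\{C_{t-1}<RM\}$, and the tower rule argument goes through. No further obstacle is expected; the rest is arithmetic.
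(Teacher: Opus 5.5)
Your proposal is correct and follows essentially the same argument as the paper: the same case split on $\Prb[C_R\geq RM]\geq 1/2$, then monotonicity of $C_t$ plus the tower property with \cref{lem:round-increment} giving $\E[\Delta C_t]>\frac{5M}{128}$ per round, summed over $R$ rounds. Your explicit remark that $C_{t-1}$ is a function of $b_1,\ldots,b_{t-1}$ is exactly what justifies the tower step, which the paper leaves implicit.
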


\begin{proof}
If $\Prb[C_R\geq RM]\geq1/2$, then $\E[C_R]\geq RM/2$, which is stronger than
\eqref{eq:distributional}.  Otherwise, monotonicity of the accumulated cost
gives
\[
  \Prb[C_{t-1}<RM]\geq\Prb[C_R<RM]>\frac12
\]
for every $t$.  Since $\Delta C_t\geq0$, the tower property and
\eqref{eq:round-increment} give
\[
  \E[\Delta C_t]
  \geq
  \E\!\left[
    \mathbf 1_{\{C_{t-1}<RM\}}
    \E[\Delta C_t\mid b_1,\ldots,b_{t-1}]
  \right]
  \geq\frac{5M}{64}\Prb[C_{t-1}<RM]
  >\frac{5M}{128}.
\]
Summing over the $R$ rounds proves~\eqref{eq:distributional}.
\end{proof}

Finally we are ready to give the proof of our main theorem.
We restate it for convenience.

\mainthm*
\begin{proof}
Apply \cref{lem:realization} to the Min-Rep instance, and call the resulting unit-cost DAG
$J_R$.  By \cref{lem:hidden-instance-properties}, this graph is
independent of $P$.  Every sequence $\sigma(P)$ has $RHq$ constraints, so
\cref{lem:offline} and the lifting direction of
\cref{lem:realization} give
\begin{equation}
  \OPT_{J_R}(\sigma(P))
  \leq \OPT_{\mathrm{MinRep}}(\sigma(P))+3RHq
  <250RN+4RN<M,
  \label{eq:dsn-offline}
\end{equation}
where $3RHq<4RN$ follows from \cref{lem:pencil-geometry}.

Now let $\mathcal A$ be any randomized online \DSN{} algorithm on $J_R$ and
fix its random tape $\omega$.  \Cref{lem:realization} turns the
resulting deterministic \DSN{} algorithm into a deterministic weighted online
Min-Rep strategy whose final label cost, denoted $C_R(P,\omega)$, is at most
$\cost_{\mathcal A}(\sigma(P),\omega)$.  By applying \cref{prop:distributional} we have,
 for every $\omega$,
\[
  \E_P[\cost_{\mathcal A}(\sigma(P),\omega)]
  \geq \E_P[C_R(P,\omega)]
  \geq\frac{5R}{128}M.
\]
Average over the random tape.  Since $P$ ranges over the finite set $\cP$ and
all costs are nonnegative, interchanging the two expectations gives
\[
  \E_P\E_\omega[\cost_{\mathcal A}(\sigma(P),\omega)]
  \geq\frac{5R}{128}M.
\]
Since $P$ is uniform on the finite set $\cP$, at least one fixed polynomial
$P^*\in\cP$ satisfies
\begin{equation}
  \E_\omega[\cost_{\mathcal A}(\sigma(P^*),\omega)]
  \geq\frac{5R}{128}M
  \geq\frac{5R}{128}
        \OPT_{J_R}(\sigma(P^*)).
  \label{eq:oblivious-sequence}
\end{equation}
The second inequality uses~\eqref{eq:dsn-offline}.  The sequence $\sigma(P^*)$
is fixed before the algorithm draws its random bits, so the adversary is
oblivious.

We now express the gap in terms of the number of vertices.  By
\cref{lem:hidden-instance-properties}, there are $Nq$ point-label chains,
$RqHq^d$ line-label chains, and $RHq^2$ possible constraints.  Counting the
vertices on all label chains and the two terminals for every possible
constraint gives
\begin{equation}
  n_R
  \leq(c_p+1)Nq+(c_r+1)RqHq^d+2RHq^2.
  \label{eq:vertex-count}
\end{equation}
Using $H\leq2q^{d-1}$ from \cref{lem:pencil-geometry}, together with
$c_p=100R$ and $c_r=100q$, this is
\[
  n_R=O(Rq^{2d+1}).
\]
Since $q=R^{10}$ and $d=\log_2R$,
\[
  \log n_R=O((\log R)^2).
\]
Consequently, for some absolute $c>0$ and all sufficiently large $R$,
\[
  \frac{5R}{128}\geq\exp\!\bigl(c\sqrt{\log n_R}\bigr).
\]
Together with~\eqref{eq:oblivious-sequence}, this proves the claimed lower
bound.
\end{proof}

\section*{AI Disclosure}
The proof was initially discovered using a custom agentic harness developed by the authors,
which used GPT 5.6 Sol, Opus 4.8, and Gemini 3.1 Pro.
The human authors have verified, simplified, and rewritten the proof.
AI was also used in drafting and polishing.
The authors assume all responsibility for the paper's content and correctness.

\bibliographystyle{alpha}
\bibliography{references.bib}

\appendix

\section{Extension of Dodis-Khanna to unit costs}
\label{app:offline-unitization}

Dodis and Khanna's lower bound for \DSN{} in \Cref{thm:dodis-khanna} is stated for weighted graphs~\cite{DodisKhanna1999}. In this appendix, we give a simple reduction which extends their lower bound to unit-cost graphs. This appendix is independent of our online  lower bound construction.

We begin with a reduction from weighted \DSN{}  to the unit cost setting. In our  \DSN{} instances, let $n$ be the number of vertices, $m \le n^2$ the number of edges, and $k \le n^2$ the number of distinct demands. We assume that the weighted \DSN{} instances have polynomial aspect ratio (i.e., the ratio of the largest weight divided by the smallest weight is polynomially bounded).

\begin{proposition}[Weighted-to-unit-cost transfer]
\label{prop:offline-unitization}
Suppose unit-cost \DSN{} has a polynomial-time $\rho(N)$-approximation on
$N$-vertex graphs, where $\rho$ is nondecreasing.  Then weighted \DSN{} has a
polynomial-time $4\rho(n+n^4)$-approximation.  If the unit-cost guarantee is
instead expressed as $\rho(k)$ in the number of demands, the weighted
guarantee is $4\rho(k)$.
\end{proposition}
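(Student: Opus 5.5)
The plan is to round weights to integers and replace each weighted edge by a unit-cost directed path whose length equals its rounded weight. Then we run the unit-cost algorithm and map its solution back to the original edges.

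Let $W$ be the cost of the maximum-weight edge on some optimal solution. We do not know $W$, so we guess it by trying every edge weight as the candidate value; there are at most $m\le n^2$ guesses, and we output the cheapest feasible solution found. For a fixed guess $W$:
\begin{itemize}
\item delete every edge of weight more than $W$;
\item set $\epsilon=W/n^2$;
\item round each remaining weight up to an integer multiple of $\epsilon$, writing $w'(e)=\lceil w(e)/\epsilon\rceil\le n^2$.
\end{itemize}
With $k\le n^2$, an optimal solution uses at most $m\le n^2$ edges. Rounding therefore increases its cost by at most $m\epsilon\le W\le\OPT$, so the rounded optimum is at most $2\OPT/\epsilon$ in units of $\epsilon$.

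Next we subdivide each edge $e$ into a path of $w'(e)\ge1$ unit edges. The new graph has at most $n+m\cdot n^2\le n+n^4$ vertices. The demand pairs are unchanged, so $k$ is preserved. Internal path vertices have in-degree and out-degree one, so any connecting subgraph in the new graph can be pruned to a union of complete paths without increasing its cost. Hence unit-cost solutions correspond to weighted solutions with cost $w'(F)$, and
\[
w(F)\le\epsilon\, w'(F)\le\epsilon\,\rho(n+n^4)\,\OPT'\le2\rho(n+n^4)\,\OPT.
\]
Here we use that $\rho$ is nondecreasing, since the new graph may have fewer than $n+n^4$ vertices.

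The main obstacle is the additive rounding loss. Bounding it by $\OPT$ requires both the correct guess of $W$ and the bound $m\le n^2$ on the number of edges in an optimal solution. This argument gives a factor $2$, comfortably within the claimed $4$. If lower bounding $\OPT$ by the maximum edge weight turns out to need slack, for instance because the deleted heavy edges interact with the guess, I would fall back to rounding with $\epsilon=W/(2n^2)$ or to guessing $W$ only within a factor of $2$ over a geometric range. This is where the factor $4$ would come from, and it is also where the polynomial aspect ratio keeps the number of guesses polynomial.
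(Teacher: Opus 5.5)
Your argument is correct and gives the same kind of reduction as the paper: delete edges above a guessed scale, round costs up to a fine grid, subdivide, run the unit-cost algorithm, and project back onto the edges whose whole replacement path was bought. The two proofs differ in which scale is guessed.

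\begin{itemize}
\item \textbf{The paper's route.} It guesses $\OPT$ within a factor of two over the geometric range $B=2^ic_{\min}$ and subdivides into $\lceil mc_e/B\rceil$ pieces. It bounds the subdivided optimum by $|F^*|+m\le 2m$, then loses a second factor of two from $B<2\OPT$. That second loss is where its constant $4$ comes from.
\item \textbf{Your route.} You guess exactly $W=\max_{e\in F^*}w(e)$ for an optimal solution $F^*$, choosing among the at most $m$ edge weights. You absorb the additive rounding loss directly via $m\cdot W/n^2\le W\le\OPT$.
\end{itemize}

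What your version buys:
\begin{itemize}
\item the better ratio $2\rho(n+n^4)$ (resp.\ $2\rho(k)$);
\item only $m$ guesses, independent of the weights;
\item no use of the polynomial-aspect-ratio assumption.
\end{itemize}

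Your closing fallback paragraph is unnecessary. Every edge of $F^*$ has weight at most $W$ by definition, so deleting heavier edges cannot interfere with the guess.

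One small repair is needed. The paper's use of the smallest \emph{positive} cost suggests zero-cost edges are allowed, and then $\lceil w(e)/\epsilon\rceil$ can be $0$, contradicting your claim $w'(e)\ge 1$. Take $w'(e)=\max\{1,\lceil w(e)/\epsilon\rceil\}$ as the paper does; the bounds $\epsilon w'(e)\le w(e)+\epsilon$ and $w'(e)\le n^2$ are unchanged. Also handle the degenerate case $\OPT=0$ (so $W=0$ and $\epsilon=0$) separately, by checking whether the zero-cost edges alone connect all demands.
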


\begin{proof}
Let $c_{\min}$ be the smallest positive edge cost and let
$C=\sum_{e}c_e$.  Then 
\[
  c_{\min}\leq\OPT\leq C.
\]
We can guess the value of $\OPT$ within a factor of two by making $O(\log(C/c_{\min}))$ guesses of the form $B=2^i c_{\min}$. One of these guesses satisfies
\begin{equation}
  \OPT\leq B<2\OPT.
  \label{eq:unitization-correct-guess}
\end{equation}
For each guess, discard edges of cost greater than $B$.  Replace every retained
edge $e=(u,v)$ by a directed $(u, v)$-path $P_e$ with $|P_e| = L_e$ edges, where 
\[
  L_e=\max\left\{1,\left\lceil\frac{mc_e}{B}\right\rceil\right\}.
\]
Since $c_e\leq B$, every $L_e\leq m$, and the subdivided instance has at most
$n+m^2$ vertices.  Subdivision also preserves acyclicity whenever the original
graph is acyclic.

For a guess satisfying~\eqref{eq:unitization-correct-guess}, every edge of an
optimal weighted solution is retained.  Replacing these edges by their paths
gives a feasible unit-cost solution of value at most
\[
  \sum_{e\in F^*}L_e
  \leq |F^*|+\frac{m}{B}\sum_{e\in F^*}c_e
  \leq m+m=2m.
\]
Conversely, let $H$ be any feasible edge set in the subdivided graph, and let
\[
  F(H)=\{e:\text{the entire path replacing $e$ belongs to $H$}\}.
\]
Every path between original vertices that enters an internal subdivision
vertex must traverse the corresponding replacement path completely.  Thus
$F(H)$ satisfies all original demands.  Replacement paths are edge-disjoint
and $L_e\geq mc_e/B$, so
\begin{equation}
  c(F(H))\leq\frac{B}{m}|H|.
  \label{eq:unitization-projection}
\end{equation}

Run the hypothetical unit-cost algorithm for every feasible guess, project
its output by~\eqref{eq:unitization-projection}, and return the cheapest
projection.  On the correct guess its cost is less than
\[
  \frac{B}{m}\,\rho(n+m^2)\,(2m)
  <4\rho(n+n^4)\OPT.
\]
This reduction does not change the demand set, so we obtain the $4\rho(k)$ approximation as well.
\end{proof}

We can now complete the extension to unit costs. 

\begin{corollary}[Offline unit-cost hardness]
\label{cor:offline-unit-hardness}
For every constant $\varepsilon>0$, unit-cost \DSN{} admits no polynomial-time
\[
  2^{(\log N)^{1-\varepsilon}}
\]
approximation on $N$-vertex graphs unless
$\mathsf{NP}$ has quasipolynomial-time algorithms.
The same conclusion holds for an approximation guarantee
$2^{(\log k)^{1-\varepsilon}}$ expressed in the number of demands.
\end{corollary}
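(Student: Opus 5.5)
The plan is to argue by contradiction: combine Theorem~\ref{thm:dodis-khanna} with Proposition~\ref{prop:offline-unitization}. Suppose unit-cost \DSN{} had a polynomial-time $\rho(N)=2^{(\log N)^{1-\varepsilon}}$-approximation on $N$-vertex graphs. This $\rho$ is nondecreasing, so Proposition~\ref{prop:offline-unitization} yields a polynomial-time weighted approximation with ratio $4\rho(n+n^4)$.

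Next compare this ratio with the Dodis--Khanna threshold. Since $n+n^4\le n^5$ for $n\ge 2$, we have $\log(n+n^4)\le 5\log n$, and so
\[
4\rho(n+n^4)\le 4\cdot 2^{5^{1-\varepsilon}(\log n)^{1-\varepsilon}}\le 2^{2+5(\log n)^{1-\varepsilon}}.
\]
Fix $\delta=\varepsilon/2$. For all sufficiently large $n$, the right-hand side is below $2^{(\log n)^{1-\delta}}$, because $(\log n)^{1-\varepsilon}/(\log n)^{1-\delta}\to 0$. Small $n$ are solved exactly by brute force in constant time. We therefore obtain a $2^{(\log n)^{1-\delta}}$-approximation for weighted \DSN{}, which contradicts Theorem~\ref{thm:dodis-khanna} unless $\mathsf{NP}$ has quasipolynomial-time algorithms. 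The same argument applies verbatim to the guarantee in terms of $k$. There the transfer gives $4\rho(k)$, and $k$ is unchanged by the reduction.

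The main obstacle is a technical one. Proposition~\ref{prop:offline-unitization} assumes polynomial aspect ratio, which is needed for polynomially many guesses. Theorem~\ref{thm:dodis-khanna} is stated for general weighted instances, so I have to check that its hard instances satisfy this assumption. I expect they do, since Dodis--Khanna's Label-Cover gadget uses costs in $\{0,1\}$ or small integers. If zero-cost edges appear, I would instead give them cost $c_{\min}/n^2$ or contract them, which changes $\OPT$ only negligibly.

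A second point concerns the $k$-version. Strictly, Theorem~\ref{thm:dodis-khanna} is stated in terms of $n$. I would note that the hard instances there have $k$ polynomially related to $n$, so $\log k=\Theta(\log n)$, and that the $k$-based hardness therefore follows with the same exponent adjustment as above.
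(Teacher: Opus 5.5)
Your proof is correct and follows the paper's argument. You pass the hypothetical unit-cost algorithm through \cref{prop:offline-unitization} with $X=n+n^4$ (resp.\ $X=k$), use $\log X=O(\log n)$, and fix any $0<\delta<\varepsilon$ so that $4\cdot 2^{(\log X)^{1-\varepsilon}}\le 2^{(\log n)^{1-\delta}}$ for large $n$, contradicting \cref{thm:dodis-khanna}.

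Two of your worries can be dropped. For the $k$-version you only need the upper bound $k\le n^2$, which holds for every instance once duplicate demands are removed (this is what the paper uses), so you need nothing about how $k$ compares to $n$ on the Dodis--Khanna instances. The polynomial-aspect-ratio hypothesis is just a standing assumption of that appendix, and it is harmless: the $O(\log(C/c_{\min}))$ guesses are polynomial in the input bit length, and the subdivided graph has at most $n+m^2$ vertices either way.
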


\begin{proof}
Suppose the stated unit-cost approximation existed.  In the
$N$-dependent case put $X=n+n^4$, and in the $k$-dependent case put $X=k$.
By \cref{prop:offline-unitization}, weighted \DSN{} would in either case have
approximation ratio
\[
  4\cdot2^{(\log X)^{1-\varepsilon}}.
\]
We have $\log X=O(\log n)$: this is immediate in the first case, while in
the second case the initial simplification gives $k\leq n^2$.  Hence, for
every fixed $0<\delta<\varepsilon$, the displayed ratio is at most
$2^{(\log n)^{1-\delta}}$ for all sufficiently large $n$, contradicting the
theorem of Dodis and Khanna~\cite{DodisKhanna1999} unless
$\mathsf{NP}$ has quasipolynomial-time algorithms.
\end{proof}

\section{Proof for submodular cost covering}
\label{app:other-consequences}

\begin{proof}[Proof of \Cref{cor:submodular-cost-covering}]
Start with the weighted online Min-Rep instance from
\cref{sec:hidden-polynomial}.  Its resources are the variable--label pairs
\[
  \mathcal Q:=\{(w,\lambda):w\in U\cup V,\ \lambda\in\Lambda_w\},
\]
where resource $(w,\lambda)$ has weight $c(w,\lambda)$.  For every possible
constraint $e=(u,v)$ and every compatible pair
$(\lambda,\mu)\in\mathcal R_{uv}$, create an action
$a_{e,\lambda,\mu}$.  This action activates the two resources
$(u,\lambda)$ and $(v,\mu)$.  Write $\mathcal A_e$ for the actions tagged by
$e$; these sets are pairwise disjoint.  Introduce a variable
$y_a\in\{0,1\}$ for every action $a$.  When constraint $e$ arrives, reveal
the upward-closed covering constraint
\[
  \max_{a\in\mathcal A_e}y_a\geq1.
\]

For $y\in\mathbb R_+^{\mathcal A}$, define
\[
  f(y):=
  \sum_{r\in\mathcal Q}c(r)
  \max\bigl(\{y_a:r\in\mathcal Q(a)\}\cup\{0\}\bigr),
\]
where $\mathcal Q(a)$ is the two-resource set activated by $a$.  We note that $f$ is continuous,
monotone, and submodular over $\mathbb R_+^{\mathcal A}$.  On the Boolean domain it
is exactly the weighted coverage function that charges once for every
resource activated by at least one selected action.  Thus these variables,
domains, constraints, and the objective form an instance of the framework of
Koufogiannakis and Young.

Selecting actions online induces a Min-Rep strategy that purchases every
activated resource.  The two strategies have the same cost at every prefix,
and the covering requirement for $e$ supplies a compatible label pair
satisfying $e$.  Conversely, from any offline Min-Rep solution, choose one
compatible pair for each requested constraint and select the corresponding
action.  Its cost is at most the label cost.  Therefore the distributional
gap of \cref{prop:distributional} carries over unchanged.

It remains to express the gap in terms of the number of actions.  The hidden
instance has $RHq^2$ possible constraints.  Each relation contains exactly
$q^d$ compatible pairs: every one of the $q^d$ line labels determines a
unique compatible point label.  Thus
\[
  m=RHq^{d+2}\leq 2Rq^{2d+1},
\]
using $H\leq2q^{d-1}$.  Since $q=R^{10}$ and $d=\log_2R$, we have
$\log m=O((\log R)^2)$.  The $\Omega(R)$ competitive gap is consequently
$\exp(\Omega(\sqrt{\log m}))$.
\end{proof}

\end{document}